\newcommand {\ignore} [1] {}
\newtheorem{thm}{Theorem}[section]
\newtheorem{lem}[thm]{Lemma}
\newtheorem{cor}[thm]{Corollary}
\newtheorem{defn}[thm]{Definition}
\newtheorem{clm}[thm]{Claim}
\newtheorem{cons}[thm]{Construction}
\newtheorem{prop}[thm]{Yroposition}
\newenvironment{theorem}{\begin{thm}\begin{rm}}%
{\end{rm}\end{thm}}
\newenvironment{lemma}{\begin{lem}\begin{rm}}%
{\end{rm}\end{lem}}
\newenvironment{corollary}{\begin{cor}\begin{rm}}%
{\end{rm}\end{cor}}
{\end{em}\end{defn}}
\newenvironment{claim}{\begin{clm}\begin{rm}}%
{\end{rm}\end{clm}}
{\end{em}\end{cons}}
{\end{rm}\end{prop}}
\newcommand{\secref}[1]{\hyperref[#1]{Section \ref{#1}}}
\newcommand{\apref}[1]{\hyperref[#1]{Appendix \ref{#1}}}
\newcommand{\thref}[1]{\hyperref[#1]{Theorem \ref{#1}}}
\newcommand{\defref}[1]{\hyperref[#1]{Definition \ref{#1}}}
\newcommand{\lemref}[1]{\hyperref[#1]{Lemma \ref{#1}}}
\newcommand{\clref}[1]{\hyperref[#1]{Claim \ref{#1}}}
\newcommand{\consref}[1]{\hyperref[#1]{Construction \ref{#1}}}
\newcommand{\figref}[1]{\hyperref[#1]{Figure \ref{#1}}}
\newcommand{\eqnref}[1]{\hyperref[#1]{Equation \ref{#1}}}
\def\poly{\operatorname{poly}}
\def\opt{{\sf OPT}}
\newcommand{\eat}[1]{}
\newcommand{\floor}[1]{\left\lfloor #1 \right\rfloor}
\newcommand{\ceil}[1]{\left\lceil #1 \right\rceil}
\newcommand{\set}[1]{\{ #1 \}}
\newcommand{\pr}[2]{{\bf Pr}_{#1}[ #2 ]}
\newcommand{\abs}[1]{|#1|}
\newcommand{\squishlist}{\begin{itemize}}
\newcommand{\squishend}{\end{itemize}}
\def\cV{\mathcal{V}}
\def\danupon#1{}
\def\parinya#1{}
\title{Graph Pricing Problem on Bounded Treewidth, Bounded Genus and $k$-Partite Graphs}
\author{
Parinya Chalermsook\footnote{Max-Planck-Institut f\"{u}r Informatik, Saarbr\"{u}cken, Germany. Email: {\tt parinya@cs.uchicago.edu}. Work partially done while at the Department of Computer Science, University of Chicago, Chicago, IL. }\ \ \
Shiva Kintali\footnote{Department of Computer Science, Princeton University, Princeton, NJ-08540. Email : {\em{kintali@cs.princeton.edu}}}\ \ \
Richard Lipton\footnote{College of Computing, Georgia Institute of Technology, Atlanta, GA-30332. Email : {\em{rjl@cc.gatech.edu}}}\ \ \
Danupon Nanongkai\footnote{School of Physical and Mathematical Sciences, Nanyang Technological University, Singapore. Email : {\em{danupon@gmail.com}}. Supported in part by the following research grants: Nanyang Technological University grant M58110000, Singapore Ministry of Education (MOE) Academic Research Fund (AcRF) Tier 2 grant MOE2010-T2-2-082, and Singapore MOE  AcRF Tier 1 grant MOE2012-T1-001-094. Work partially done while at Georgia Institute of Technology, USA, and University of Vienna, Austria.}\\
}
\date{}
\begin{document}

\maketitle


\begin{abstract}
Consider the following problem. A seller has infinite copies of $n$ products represented by nodes in a graph. There are $m$ consumers, each has a budget and wants to buy two products. Consumers are represented by weighted edges. Given the prices of products, each consumer will buy both products she wants, at the given price, if she can afford to. Our objective is to help the seller price the products to maximize her profit.

This problem is called {\em graph vertex pricing} ({\sf GVP}) problem and has resisted several recent attempts despite its current simple solution. This motivates the study of this problem on special classes of graphs. In this paper, we study this problem on a large class of graphs such as graphs with bounded treewidth, bounded genus and $k$-partite graphs.

We show that there exists an {\sf FPTAS} for {\sf GVP} on graphs with bounded treewidth. This result is also extended to an {\sf FPTAS} for the more general {\em single-minded pricing} problem. On bounded genus graphs we present a {\sf PTAS} and show that {\sf GVP} is {\sf NP}-hard even on planar graphs.




We study the Sherali-Adams hierarchy applied to a natural Integer Program formulation that $(1+\epsilon)$-approximates the optimal solution of {\sf GVP}. 
Sherali-Adams hierarchy has gained much interest recently as a possible approach to develop new approximation algorithms. We show that, when the input graph has bounded treewidth or bounded genus, applying a constant number of rounds of Sherali-Adams hierarchy makes the integrality gap of this natural {\sf LP} arbitrarily small, thus giving a $(1+\epsilon)$-approximate solution to the original {\sf GVP} instance.



On $k$-partite graphs, we present a constant-factor approximation algorithm. We further improve the approximation factors for paths, cycles and graphs with degree at most three.
\end{abstract}

\thispagestyle{empty}
\newpage
\setcounter{page}{1}

\section{Introduction}

Consider the following problem where a seller is trying to sell her products to make the most profit. The seller has infinite copies of $n$ products represented by nodes in a graph $G(V,E)$. She knows that there are $m$ consumers who want to buy exactly two products each. Each consumer is represented by an edge $e$ in $G$ and has a budget $B_e$. She will buy both products (represented by the end vertices of the edge $e$) if the price of both products together does not exceed her budget $B_e$. The seller's goal is to price all products to make the most revenue. That is, she wants to find a price function $p:V(G)\rightarrow \mathbb{R}_+\cup \{0\}$ that maximizes
\[\sum_{uv\in E(G)} \begin{cases}p(u)+p(v) &\mbox{if $p(u)+p(v)\leq B_e$,}\\ 0 &\mbox{otherwise.}
\end{cases}\]

This problem is called the {\em graph vertex pricing} ({\sf GVP}) problem. It is one of the fundamental special cases of the {\em single-minded item pricing} ({\sf SMP}) problem. In {\sf SMP}, consumers may want to buy more than two products (thus we can represent the input by a hypergraph with budgets on the hyperedges). This problem arises from the application of pricing digital goods (see \cite{GuruswamiHKKKM05} for more details).

Both {\sf GVP} and {\sf SMP} are proposed by Guruswami et al. \cite{GuruswamiHKKKM05} along with an $O(\log n+\log m)$ approximation algorithm for the {\sf SMP} problem and an {\sf APX}-hardness of the {\sf GVP} problem. 
Balcan and Blum \cite{BalcanB07} presented a surprisingly simple algorithm achieving a factor four approximation for the {\sf GVP} problem. 
The hardness of approximation of {\sf GVP} was recently shown to be $2-\epsilon$, assuming the Unique Games Conjecture \cite{KhandekarKMS09}. Even more recently, the hardness of $k^{1-\epsilon}$ for {\sf SMP} was shown by \cite{ChalermsookLN13FOCS} (building on \cite{BriestK11,ChalermsookLN13soda,ChalermsookCKK12}), assuming ${\sf P}\neq {\sf NP}$ where $k$ is the size of the largest hyperedge.

The algorithm of Balcan and Blum first constructs a bipartite graph by randomly partitioning the vertices into two sides and deleting edges connecting vertices on the same side. It then picks one side randomly and prices all vertices on that side to zero. The resulting instance can be solved optimally. The approximation guarantee of four follows from the fact that each edge is deleted with probability half and pricing vertices on one random side to zero reduces the optimal revenue by half (in expectation). The algorithm can be derandomized using standard techniques.

Understanding special cases might lead to improving the upper bound for the general case, as understanding the case of bipartite graph lead to a $4$-approximation algorithm for the general case. In general, it is interesting to explore how the combinatorial structure of the input graph influences the approximability of the problem. This line of attack was initiated recently in \cite{KhandekarKMS09} where it is shown that the {\sf GVP} problem is {\sf APX}-hard on bipartite graphs, and in \cite{KrauthgamerMR11} where an improved algorithm is presented for the case where the range of consumers budgets is restricted.

In this paper, we continue this line of study and present approximation algorithms and hardness results of the {\sf GVP} problem on many classes of graphs such as bounded treewidth graphs, bounded genus graphs and $k$-partite graphs.


\subsection{Our Results}

\paragraph{Bounded treewidth graphs and hypergraphs:} To understand the structure of {\sf GVP}, trees are a natural class of graphs to study. {\sf GVP} is not known to be NP-hard on trees. We present an {\sf FPTAS} for trees based on a dynamic programming algorithm. We generalize our algorithm to bounded treewidth graphs by applying our technique on the input graph's tree decomposition, which can be computed in linear time by an algorithm of Bodlaender~\cite{Bodlaender96}. We extend our algorithm to give an {\sf FPTAS} for solving {\sf SMP} on bounded treewidth {\em{hypergraphs}} as well.

\paragraph{Bounded genus graphs:} Bounded genus graphs are broad class of graphs which play a major role both in structural graph theory and algorithmic graph theory. Several important {\sf NP}-hard optimization problems admit improved algorithms on planar graphs (i.e., graphs with genus zero). We present a {\sf PTAS} for {\sf GVP} on bounded genus graphs and show that it is {\sf NP}-hard even on planar graphs.

Our result relies on Baker's and Eppstein's Techniques \cite{Baker94,Eppstein00}. However, while such previous results presented a polynomial time exact algorithms on graphs with bounded treewidth, it does not seem to be the case in the {\sf GVP} problem. Instead, we show that the {\sf GVP} admits fully polynomial-time approximation scheme ({\sf FPTAS}) on bounded treewidth graphs.

To the best of our knowledge, the only other work that studied any variation of the pricing problem on graphs with bounded treewidth or bounded genus is \cite{CardinalDFJNW09}, where it is shown that the {\em Stackelberg minimum spanning tree pricing} ({\sf SMST}) problem can be solved in polynomial time on bounded treewidth graphs and {\sf NP}-hard on planar graphs. The question whether there is a {\sf PTAS}  for {\sf SMST} on planar graphs is still open. Thus, our result is the first {\sf PTAS} for a pricing problem on planar graphs and it extends to bounded genus graphs.

%

\paragraph{Integrality gap of the Sherali-Adams relaxation:} We study the integrality gap of the linear program in the Sherali-Adams hierarchy \cite{SheraliA90}. Sherali-Adams relaxation is one of the lift-and-project schemes which have received much interest recently in the approximation algorithms community (see, e.g \cite{KarlinMN10} and references therein). A question of particular interest is how the integrality gaps evolve through a series of rounds of Sherali-Adams lift-and-project operations. Positive results (i.e., small integrality gap) could potentially lead to an improved algorithm while negative results rule out a wide class of approximation algorithms.

In this paper, we study the Sherali-Adams hierarchy applied to a natural Integer Program formulation that $(1+\epsilon)$-approximates the optimal solution of {\sf GVP} (the $(1+\epsilon)$ approximation factor is needed to be able to write an LP). 
We show a positive result that  the integrality gap after we apply $O_\epsilon(\min(g, w))$ rounds of Sherali-Adams lift and project is one for the case of bounded-treewidth graphs and $(1+\epsilon)$ for the case of bounded-genus graphs, for any $\epsilon$, where $g$ and $w$ are the genus and the treewidth of the input graph respectively and the constant in $O_\epsilon$ depends on $\epsilon$. Thus the Sherali-Adams hierarchy gives us a $(1+\epsilon)$-approximate solution to the original {\sf GVP} instance for these two cases. 
%
%
To our knowledge, besides our work, the only work that studies the Sherali-Adams relaxation in the realm of pricing problem is \cite{KhandekarKMS09} where a lower bound of $4-\epsilon$ on the integrality gap of a natural LP is shown for the general case. It can be observed that this LP from \cite{KhandekarKMS09} is equivalent to the result of two rounds of Sherali-Adams lift-and-project operation on our LP. Our work is the first result that studies the integrality gap of this problem when {\em{many}} rounds of Sherali-Adams hierarchy are allowed and suggests that Sherali-Adams relaxation might be useful in attacking the {\sf GVP} problem on general graphs.

Sherali-Adams relaxation for combinatorial optimization problems on bounded treewidth graphs has been considered before, most notably in~\cite{Treewidth-SheraliAdams,SparsestCut-Treewidth,WainwrightJ04}; in particular, a bounded number of rounds in the Sherali-Adams hierarchy is known to be tight, e.g., for constraint satisfaction satisfaction problems (CSP)~\cite{WainwrightJ04}.
With some pre-processing, {\sf GVP} can be seen as a $2$-CSP when the domain size is large, so the gap upper bound in bounded treewidth graphs follows immediately.   
We refer the readers to \cite{Treewidth-SheraliAdams,SparsestCut-Treewidth,WainwrightJ04} for a more complete survey on Sherali-Adams relaxation.


\paragraph{$k$-partite graphs:} We also study the {\sf GVP} problem on $k$-partite graphs. This class of graphs generalizes
the class of bipartite graphs used to develop the 4-approximation algorithm of \cite{BalcanB07} and includes graphs of bounded degree (due to Brooks' theorem~\cite{Brook41}) and graphs of bounded genus (due to Heawood's result \cite{heawood90}).

%

Since the {\sf GVP} problem is {\sf APX}-hard even on bipartite graphs \cite{KhandekarKMS09}, we cannot hope for a {\sf PTAS} for $k$-partite graphs. We present a $(4\cdot\frac{k-1}{k})$-approximation algorithm when $k$ is even and a $(4\cdot\frac{k}{k+1})$-approximation when $k$ is odd. This gives a slight improvement on the approximation factor on general graphs. We show that improving these bounds further, when $k$ is even, is as hard as improving the 4-approximation factor for the general graphs.

\paragraph{Bounded Degree graphs:} Finally, we study the problem on graphs of degree at most two and three. For graphs of degree at most two (i.e., paths and cycles), we show that the {\sf GVP} problem can be solved optimally in polynomial time. For graphs of degree at most three, the previous result on tri-partite graphs implies a $3$-approximation algorithm since these graphs are 3-colorable (by Brook's theorem \cite{Brook41}). We present a different algorithm that is a 2-approximation for this class of graphs.

\paragraph{Unit-Demand Min-Buying Pricing} We remark that all of our results also apply to the related problem of unit-demand min-buying pricing problem where each consumer wants to buy the cheapest products among all products that she is interested in; see \cite{GuruswamiHKKKM05} for the detailed definition of the problem.

\paragraph{Organization:} In Section~\ref{sec: bounded treewidth}, we present algorithms for graphs of bounded tree-width. We show {\sf NP}-hardness proof for {\sf GVP} on planar graphs in Section~\ref{sec: bounded genus}, and provide a polynomial time approximation scheme. We discuss the integrality gap of Sherali-Adams relaxation in Section~\ref{sec: Sherali-Adams}. We present algorithms for $k$-partite graphs and graphs of bounded degrees in Section~\ref{sec: bounded degrees}.

\section{Preliminaries}\label{sec:prelim}

\paragraph{Integral and polynomially bounded budget assumption:} We argue that we may assume w.l.o.g. that optimal prices, as well as budgets, are integral and polynomially bounded. The proof of this fact uses standard techniques and is provided for completeness. 

\begin{lemma}
\label{lemma: poly budget}
Let $(G, \set{B_e}_{e \in E(G)})$ be an input instance. Then for any $\epsilon >0$, we can find, in polynomial time, another set of budgets $\set{B'_e}_{e \in E}$ such that
\begin{itemize}
  \item For all $e \in E$, $B'_e$ is integral and has value at most $P=O(m/\epsilon)$.
  \item There is a price $p'$ such that $p'(v)$ is integral for all $v$, and the revenue of $p'$ is at least $(1-\epsilon) \opt'$, where $\opt'$ is the optimal solution of the new instance.
  \item Any $\gamma$ approximation algorithm for an instance $(G, \set{B'_e})$ can be turned into $(1+\epsilon)\gamma$ approximation algorithm for the original instance.
\end{itemize}
\end{lemma}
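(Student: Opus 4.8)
The plan is to reduce the budgets to a polynomially bounded integral range in two steps: first a rescaling/rounding step to make everything integral, then a truncation step that discards "large" budgets whose contribution is negligible. Since prices are implicitly constrained to lie in a finite set once the budgets are fixed, integrality of optimal prices will follow for free.

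\textbf{Step 1: Reduce to integral budgets.} First I would observe that in any instance, there is an optimal price function $p$ with $p(v) \in \{0\} \cup \{B_e - p(u) : uv \in E\}$ for each $v$; more crudely, one can argue that an optimal solution exists in which every price is a partial sum of budgets, hence the set of ``relevant'' prices is finite and determined by the $B_e$'s. Now scale all budgets by a common factor and round: replace $B_e$ by $\lfloor B_e / s \rfloor$ where $s$ is chosen so that the largest budget maps to roughly $m/\epsilon$. Because every edge that is ``satisfied'' by a price function contributes at least its (now rounded) price, and there are only $m$ edges, rounding each budget down by at most $s$ loses at most $m s$ additively in revenue, while the optimum is at least $\max_e B_e \geq s \cdot (m/\epsilon)$ up to constants (an optimal solution can always price one endpoint of the max-budget edge at $0$ and the other at $B_e$). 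Hence the multiplicative loss is $O(\epsilon)$. The rounded prices can be taken integral by the same partial-sum argument applied to the rounded instance.

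\textbf{Step 2: Truncate large budgets.} After Step 1 the budgets are integral but possibly still large (up to the original range). Here I would sort the budgets and argue that only the top few matter: formally, let $B_{(1)} \ge B_{(2)} \ge \cdots$ be the sorted budgets. If we keep only edges with $B_e \ge \epsilon B_{(1)} / m$ and set the rest aside, the discarded edges together can contribute at most $m \cdot (\epsilon B_{(1)}/m) = \epsilon B_{(1)} \le \epsilon \cdot \opt$ to \emph{any} solution (again since $\opt \ge B_{(1)}$ up to constants). So we may as well cap every budget at $P = O(m/\epsilon)$ after dividing through by $\epsilon B_{(1)}/m$; the surviving budgets are integral (after a final floor, absorbing another $1+O(\epsilon)$ factor as in Step 1) and bounded by $P$.

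\textbf{Step 3: The approximation-preserving reduction.} For the third bullet, given a $\gamma$-approximation $p'$ for $(G, \{B'_e\})$, I would lift it back to a price function for the original instance by reversing the scaling of Step 1 (multiply prices by $s$, or round appropriately). The revenue in the original instance of the lifted solution is at least (revenue of $p'$ in the scaled instance) minus the additive slack introduced by rounding, which is an $\epsilon$-fraction of $\opt$; combined with $\text{rev}(p') \ge \frac{1}{\gamma}\opt'$ and $\opt' \ge (1-\epsilon)\opt$ from the first two bullets, this yields a $(1+\epsilon)\gamma$-approximation for the original instance after adjusting $\epsilon$ by a constant factor.

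\textbf{Main obstacle.} The delicate point is Step 1: one must be careful that rounding budgets \emph{down} does not cause a previously satisfied edge to become unsatisfied in a way that cascades, and that the chosen scale $s$ simultaneously (i) makes the largest budget $\Theta(m/\epsilon)$ and (ii) keeps the additive rounding loss $O(\epsilon \cdot \opt)$. The clean way to handle this is to use the lower bound $\opt = \Omega(\max_e B_e)$ throughout — it is what makes every ``$ms$'' or ``$\epsilon B_{(1)}$'' additive error translate into an $O(\epsilon)$ multiplicative error — and to note that a price function and budget set together only ever reference finitely many price values, so integrality is not a real constraint but a normalization.
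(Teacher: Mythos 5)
Your proposal is essentially the paper's proof: scale every budget and price by $M=\epsilon B_{\max}/m$, floor, and use $\opt \geq B_{\max}$ to turn the $O(mM)$ additive rounding loss into an $O(\epsilon)$ multiplicative one. Your Step~2 is redundant (the scaling in Step~1 already bounds all budgets by $m/\epsilon$), and the ``cascading'' concern you flag is resolved simply by superadditivity of the floor: $\lfloor p^*(u)/M\rfloor+\lfloor p^*(v)/M\rfloor\le\lfloor(p^*(u)+p^*(v))/M\rfloor\le\lfloor B_e/M\rfloor$, so the floored optimal prices remain feasible in the floored instance.
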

\begin{proof}
Let $B_{\max}$ denote the maximum budget among all consumers, and $M= \epsilon B_{\max}/m$. Notice that $B_{\max} \leq \opt$. We create a new instance of the problem as follows: For each consumer $u v\in E$, we define new budget $B'_{e} = \floor{B_e/M}$. Let $p^*$ be the optimal price for the old instance. We define the price $p'(v) = \floor{p^*(v)/M}$.

First it is clear that the first property holds because $B_e/M \leq O(m/\epsilon)$. 
To prove the second property, let $\opt'$ denote the optimal revenue of the new instance. 
Clearly, $M \opt' \leq \opt$, since any price of new instance can be turned into the price of the old instance that collects $M$ times as much.
Let $E^* \subseteq E$ be the set of edges that contribute to the revenue w.r.t. $p^*$.  
Next, for any consumer $e=uv \in E^*$ that collects the revenue of $p^*(u)+p^*(v)$ in the old instance, we have $p'(u) + p'(v) \leq \floor{(p^*(u) + p^*(v))/M}\leq B'_e$, so this consumer does not go over budget with price $p'$ in the new instance. 
And $\sum_{e\in E^*} M (p'(u) + p'(v)) \geq \sum_{e \in E^*} \left( p^*(u) + p^*(v) - 2M \right) \geq (1-2\epsilon) \opt \geq (1-2\epsilon) M\opt'$. Therefore, the revenue collected by $p'$ is at least $(1-O(\epsilon))\opt'$.

Observe that any $\alpha$-approximation algorithm for the new instance can be turned into $(1+2\epsilon)\alpha$ approximation algorithm for the old instance as follows: Let $p'$ be the price that collects a total revenue of $\opt'/\alpha$ in the new instance. We define $p(v) = M p'(v)$ for all $v \in V$. Then the total revenue we get is $\frac{M \opt'}{\alpha} \geq \frac{(1-\epsilon)\opt}{\alpha} \geq \frac{\opt}{(1+2\epsilon)\alpha}$.
\end{proof}

\paragraph{Treewidth:} Let $G$ be a graph with treewidth at most $k$. By definition (see, e.g., \cite{RobertsonS84}), there is a {\em tree-decomposition} $(T, \cV)$ of $G$ of width $k$; that is, there is a tree $T$ and $\cV=(V_t)_{t\in T}$, a family of vertex sets $V_t\subseteq V(G)$ indexed by the vertices $t$ of $T$, with the following properties.

\begin{enumerate}
\item $V(G)=\bigcup_{t\in T} V_t$;
\item for every edge $e=uv\in G$ there exists $t\in T$ such that both $u$ and $v$ lie in $V_t$;
\item for any $v\in V(G)$, if $v\in V_{t_1}$ and $v\in V_{t_2}$ then $v\in V_{t_3}$ for any $t_3$ in the (unique) path between $t_1$ and $t_2$;
\item $\max_{t\in T} |V_t|=k+1$.
\end{enumerate}

For a fixed constant $k$, Bodlaender~\cite{Bodlaender96} presented a linear time algorithm that determines if the treewidth of $G$ is at most $k$ and if so constructs a corresponding tree decomposition.

\danupon{Should we move the definition of tree-width on hypergraphs here? I'm also not sure if we have to define genus and Sherali-Adams relaxation here (probably not).}


\section{{\sf FPTAS} on bounded treewidth graphs}
\label{sec: bounded treewidth}

In this section, we prove the following theorem. 

\begin{theorem} 
Let $G$ be a graph of treewidth at most $k$. 
Then there is an $(1+\epsilon)$ approximation algorithm that runs in time $\poly(|V(G)|, O(m/\epsilon)^k)$.  
\end{theorem}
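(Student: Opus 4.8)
The plan is to reduce the problem, via \lemref{lemma: poly budget}, to an instance where all budgets are integers bounded by $P = O(m/\epsilon)$, and then to solve that bounded-budget instance exactly by dynamic programming over a tree decomposition. By the third property of \lemref{lemma: poly budget}, a $\gamma$-approximation on the rounded instance yields a $(1+\epsilon)\gamma$-approximation on the original; since we will solve the rounded instance optimally (i.e.\ $\gamma = 1$), the final guarantee is $(1+\epsilon)$ after a rescaling of $\epsilon$. In the rounded instance, every vertex price that could ever be useful lies in $\{0, 1, \dots, P\}$ (a price above every incident budget is equivalent to price $P$, say), so the ``state space'' per vertex is $O(m/\epsilon)$.

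First I would invoke Bodlaender's algorithm \cite{Bodlaender96} to compute, in linear time, a tree decomposition $(T, \cV)$ of width $k$, and then convert it into a \emph{nice} tree decomposition (rooted, with leaf, introduce, forget, and join nodes) of size $O(k \cdot |V(G)|)$ in the standard way. The DP table at a node $t$ is indexed by an assignment $\phi: V_t \to \{0,\dots,P\}$ of prices to the (at most $k+1$) vertices in the bag $V_t$; the stored value $A_t[\phi]$ is the maximum revenue collectible from all edges whose \emph{both} endpoints have already appeared in the subtree rooted at $t$, over all price functions on those vertices that agree with $\phi$ on $V_t$. The number of table entries per node is $(P+1)^{k+1} = O(m/\epsilon)^{k+1}$, which drives the claimed running time $\poly(|V(G)|, O(m/\epsilon)^k)$.

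Next I would specify the recurrences. At an \emph{introduce} node that adds vertex $v$ to the bag, we extend $\phi$ by a choice of $p(v) \in \{0,\dots,P\}$ and add the revenue from every edge $vu$ with $u \in V_t$ (these are exactly the edges incident to $v$ that become ``fully seen'' at this node, by property~3 of tree decompositions), where edge $vu$ contributes $p(v)+p(u)$ if $p(v)+p(u) \le B_{vu}$ and $0$ otherwise. At a \emph{forget} node that drops $v$, we take the max over the forgotten price: $A_t[\phi] = \max_{c} A_{t'}[\phi \cup \{v \mapsto c\}]$. At a \emph{join} node with children $t_1, t_2$ sharing bag $V_t$, we set $A_t[\phi] = A_{t_1}[\phi] + A_{t_2}[\phi]$ (no double counting, since an edge fully inside the subtree must, by connectivity, lie in a subtree that does not straddle the join). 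Leaves are base cases. The optimum of the rounded instance is $\max_\phi A_r[\phi]$ at the root $r$, and it is read off in $O(m/\epsilon)^{k+1}$ time.

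The main obstacle — and the only place where real care is needed — is the bookkeeping that ensures each edge's revenue is counted \emph{exactly once} across the whole DP: one must pin down, at each node type, precisely which edges are ``newly completed'' and argue (using properties~2 and~3 of the tree decomposition) that this partitions $E(G)$ and is consistent with the subtree-based definition of $A_t[\phi]$. Everything else — correctness of the rounding, the running-time accounting, and the observation that prices may be capped at $P$ — is routine. Rescaling $\epsilon$ at the end absorbs the $(1+O(\epsilon))$ losses from \lemref{lemma: poly budget} into the stated $(1+\epsilon)$.
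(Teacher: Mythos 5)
Your overall plan is the same as the paper's: reduce via Lemma~\ref{lemma: poly budget} to integral prices in $\{0,\dots,P\}$ with $P=O(m/\epsilon)$, then do dynamic programming over a tree decomposition with table entries indexed by the $(k+1)$-tuple of bag prices, for a table of size $\poly(|V|)\cdot P^{O(k)}$. The only real divergence is cosmetic: the paper works directly with an arbitrary rooted tree decomposition, assigning each edge $e$ to the unique node $t(e)$ closest to the root whose bag contains both endpoints (so the sets $E_t$ partition $E$), and uses an explicit consistency indicator $r(\cdot)$ to force a child's bag assignment to agree with the parent's on shared vertices; you instead pass to a nice tree decomposition (introduce/forget/join nodes) so that consistency is automatic. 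Both are standard and equivalent in power.

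However, your join-node recurrence has a genuine bug given the invariant you state. You define $A_t[\phi]$ to be the maximum revenue over \emph{all} edges whose two endpoints both appear somewhere in the subtree rooted at $t$. Under this definition, any edge $uv$ with \emph{both} $u,v\in V_t$ is counted in $A_{t_1}[\phi]$ and in $A_{t_2}[\phi]$ at a join node (both endpoints appear in each child's subtree, and indeed such an edge is charged at some introduce node inside each side), so $A_t[\phi]=A_{t_1}[\phi]+A_{t_2}[\phi]$ double-counts every edge internal to the bag $V_t$. Your parenthetical justification (``an edge fully inside the subtree must, by connectivity, lie in a subtree that does not straddle the join'') only handles edges with at least one endpoint outside $V_t$; it says nothing about edges with both endpoints in $V_t$, which are exactly the problematic ones. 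The standard fix is either to subtract the revenue of the intra-bag edges under $\phi$ at each join, i.e.\ $A_t[\phi]=A_{t_1}[\phi]+A_{t_2}[\phi]-f(V_t,\phi)$, or to change the invariant so that $A_t[\phi]$ counts only edges with at least one endpoint already forgotten (charging an edge $uv$ at the forget node of whichever endpoint is forgotten first, and adding edges internal to the root bag at the very end). Either fix restores correctness; the rest of your argument, including the rounding, the introduce-node justification via the connectivity property, and the running-time accounting, is sound.
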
 

We denote by $P=O(m/\epsilon)$ the maximum possible budget and prices, as obtained from Lemma~\ref{lemma: poly budget}.
Let $G$ be any input graph with edge weights satisfying Lemma~\ref{lemma: poly budget}. Now, assuming that we have a ``rooted" tree decomposition $(T, \cV)$, we solve the problem exactly in time $\poly(|T|, P^k)$ (as described below) and apply Lemma~\ref{lemma: poly budget} to get the desired {\sf FPTAS}.

First, in addition to $V_t$ for $t\in T$, we define set $E_t$ as follows. Initially, we let $E_t=\emptyset$. For each edge $e\in E(G)$, let $t(e)$ be the vertex in $T$ such that both end vertices of $e$ are in $V_t$ that is nearest to the root. Note that $t$ exists by the second property of tree decomposition (cf. Section~\ref{sec:prelim}) and is unique by the third property. We add each edge $e$ to $E_{t(e)}$. For any $t\in T$, let $T_t$ be the subtree of $T$ rooted at $t$.

For any vertex $t$ with $V_t=\{v_1, \ldots, v_{k+1}\}$ and integers $Q_1,\ldots, Q_{k+1} \in [P]$, we have table entry $R[t, Q_1, \ldots, Q_{k+1}]$ defined to be the maximum revenue we can get from edges in $\bigcup_{t'\in T_t} E_{t'}$ when the price of $v_{j}$ is $Q_j$ for all $j=1, \ldots, k+1$ (It is possible that some node $t$ may have $|V_t| <k+1$, but we assume that $|V_t| = k+1$ for simplicity of presentation. Only minor modification is required to handle the case when $|V_t| <k+1$.) Note that the size of table $R$ is $\poly(|T|, P^k)$.


We compute $R[t, Q_1,\ldots, Q_{k+1}]$ as follows. 
When $t$ is a leaf node, we set the price of $v_{j}$ to be $Q_j$ for all $j$. Note that for all $e\in E_t$, both end vertices of $e$ are in $V_t$. Thus, we can compute the revenue we obtain from edges in $E_t$ without pricing any vertices outside $V_t$ and so we can compute $R[t, Q_1,\ldots, Q_{k+1}]$. For non-leaf node $t$, we use the following equality.
\[R[t, Q_1,\ldots, Q_{k+1}]=\sum_{t'\in C(t)}\max_{Q'_1,\ldots, Q'_{k+1}} r(t', Q'_1, \ldots, Q'_{k+1})+f(t, Q_1, \ldots, Q_{k+1})\]
where $C(t)$ is the set of children of vertex $t$ in the tree decomposition, and we define $r(t', Q'_1, \ldots, Q_{k+1}')$ and $f(t, Q_1, \ldots, Q_{k+1})$ as follows. We let $r(t', Q'_1, \ldots, Q_{k+1}')$ equals $-\infty$ if there exists ${j'}$ and $j$ such that $v_{{j'}}\in V_{t'}$ and $v_{j}\in V_t$ such that $v_{j'}=v_j$ and $i'_{j'}\neq i_j$ (in other words, the same node is set to price $i_j$ at vertex $t$ but to different price $i'_{j'}$ at vertex $t'$). Otherwise, $r(t', Q'_1, \ldots, Q_{k+1}') = R[t', Q'_1, \ldots, Q_{k+1}']$. Intuitively, we use $r(t', Q'_1, \ldots, Q_{k+1}')$ to make sure that every vertex receives only one price.
We define $f(t, Q_1, \ldots, Q_{k+1})$ to be the revenue we receive from edges in $E_t$ when we price $v_i$ to $Q_i$, for all $i$. Note that we can compute $f(t, Q_1, \ldots, Q_{k+1})$ without pricing vertices outside $V_t$ since, by definition, both end vertices of every edge in $E_t$ are in $V_t$.


\paragraph{Extension to bounded treewidth hypergraphs:} 
We note that the {\sf FPTAS} can be extended to solve {\sf SMP} on hypergraphs of bounded tree-width defined naturally as follows. Consider a hypergraph $H$ of width $k$ (following the definition in \cite{RobertsonS84}). This means that there is a {\em tree-decomposition} $(T, \cV)$ of $H$ of width $k$; that is, there is a tree $T$ and $\cV=(V_t)_{t\in T}$, a family of vertex sets $V_t\subseteq V(H)$ indexed by the vertices $t$ of $T$, with the following properties.

\begin{enumerate}
\item $V(H)=\bigcup_{t\in T} V_t$;
\item for every edge $e\in E(H)$ there exists $t\in T$ such that $e \subseteq V_t$;
\item for any $v\in V(H)$, if $v\in V_{t_1}$ and $v\in V_{t_2}$ then $v\in V_{t_3}$ for any $t_3$ in the (unique) path between $t_1$ and $t_2$;
\end{enumerate}

It is observed that if the tree-width of $H$ is bounded by a constant, then its tree decomposition can be constructed, as follows. For any hypergraph $H$, we construct a graph $G$, called a {\em primal} graph of $H$ by letting
\[G=\left(V(H), \{(u, v) \mid u\neq v, \mbox{there exists $e\in E(H)$ such that $u, v\in e$}\}\right)\,.\]
Notice also that (see, e.g., \cite{GottlobGMSS05,AdlerGG07}) that $(T, \cV)$ is a tree decomposition of $H$ if and only if it is a tree decomposition of $G$.
Note again that we can recognize if the tree-width of $G$ is a fixed constant and, if it is, construct a tree decomposition of $G$ in linear time~\cite{Bodlaender96}. Assuming that we have the tree decomposition $(T, \cV)$, we solve the problem in time $\poly(|T|, P^k)$, where $P$ is the maximum possible budget from Lemma~\ref{lemma: poly budget}, in the same way as in Section~\ref{sec: bounded treewidth} except that we have to define $E_t$ for hyperedges instead of edges. This is done by defining $t(e)$ for each hyperedge $e$ to be the node $t$ in tree $T$ such that, among $t'\in T$ such that $V_{t'}\subseteq e$, $t$ is nearest to the root. Note that such node exists since, for any clique $C$, there is a node $t$ such that $V_t$ contains all nodes in $C$. The rest of the algorithm remains the same.

%


\section{Graphs of Bounded Genus}
\label{sec: bounded genus}

In this section we study graphs of bounded genus. First we show that the problem is strongly {\sf NP}-hard even for planar graphs ($g=0$). This implies that there is no {\sf FPTAS} for this special case. Then, we design a {\sf PTAS} for it, thus settling the complexity for the bounded genus case.

\subsection{Hardness}
\begin{theorem}\label{thm:hardness_planar}
The graph vertex pricing problem on planar graphs is strongly {\sf NP}-hard.
\end{theorem}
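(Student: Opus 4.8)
The plan is to give a polynomial-time reduction from a problem that is already strongly {\sf NP}-hard on planar instances. A natural source is \emph{Independent Set} on planar graphs of bounded degree (equivalently \emph{Vertex Cover} on such graphs), or \emph{Planar 3-SAT}; the latter is attractive because planarity of the constructed {\sf GVP} instance can be inherited from the planar embedding of the formula. Given such an instance $\mathcal{I}$, I would build in polynomial time a planar graph $G$ with \emph{integer} budgets bounded by a fixed polynomial in $|\mathcal{I}|$, together with a target revenue $R$ that is an explicit polynomial, so that $\mathcal{I}$ is a ``yes'' instance iff $G$ admits a pricing of revenue at least $R$. Since every number in the reduced instance is polynomially bounded, {\sf NP}-hardness of this decision problem is automatically \emph{strong}, and hence (as noted) rules out an {\sf FPTAS} for {\sf GVP} on planar graphs unless ${\sf P}={\sf NP}$.

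The reduction would be by local gadgets. Each variable (resp.\ each vertex of the \emph{Independent Set} instance) is represented by a small gadget having exactly two ``canonical'' optimal price profiles, encoding \textsc{true}/\textsc{false} (resp.\ in/out of the set); each clause (resp.\ edge) is represented by a gadget whose budgets make it impossible --- or strictly suboptimal --- to collect its full revenue unless the incident variable gadgets are set consistently with a satisfying assignment (resp.\ unless the chosen vertex set is independent); and a ``reward'' edge (for instance a pendant edge of small budget attached to each variable/vertex gadget) contributes a fixed bonus $c$ exactly in the \textsc{true}/in state. Choosing the budget constants so that the best \emph{canonical} pricing has revenue $R_0 + c\cdot(\#\text{satisfied clauses, or size of the chosen independent set})$ makes the forward direction immediate: a satisfying assignment / large independent set yields the matching canonical pricing of revenue exactly $R$. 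Planarity is maintained because all gadgets are planar and are glued along single vertices or edges, wires are routed along the planar embedding of $\mathcal{I}$, and any edge subdivisions used to create gadget room preserve planarity.

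The crux, and the step I expect to be the main obstacle, is the reverse direction: an \emph{arbitrary} pricing of $G$ of revenue $\ge R$ must be decodable into a solution of $\mathcal{I}$. The subtlety is that {\sf GVP} optima need not be integral or canonical --- already the triangle with unit budgets has the all-$\tfrac12$ pricing beating every $0/1$ pricing, so fractional prices can strictly outperform any discrete assignment. I would therefore prove a \emph{normalization lemma}: every pricing can be transformed, without decreasing revenue, into one in which every variable/vertex gadget sits in a canonical state. This should follow by a local exchange argument, gadget by gadget --- using the specific small budget values to show that deviating a gadget's internal prices from a canonical profile can always be ``rounded'' back to a canonical profile while only reallocating revenue among that gadget's own edges and its interface edges --- possibly after first establishing a few easy structural facts (prices may be assumed bounded by the largest relevant budget, as in \lemref{lemma: poly budget}; certain interface vertices may be assumed to sit at prescribed values). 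With the normalization lemma in hand, a pricing of revenue $\ge R$ yields a canonical pricing of revenue $\ge R$, which by the accounting above forces enough clauses to be satisfied / the chosen set to be independent of size at least $k$, completing the equivalence and hence the proof of strong {\sf NP}-hardness.
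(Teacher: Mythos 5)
Your high-level plan points in the right direction: the paper does reduce from \emph{Vertex Cover} on planar graphs (strongly {\sf NP}-hard by Garey--Johnson), and it does use exactly the device you suggest of attaching a pendant ``reward'' edge with budget~$1$ to each original vertex. But your proposal leaves the reverse direction — which you yourself call ``the crux'' — as an unproven ``normalization lemma,'' and this is a genuine gap: you never exhibit the gadgets, never pin down the budget constants, and never show the local exchange argument actually closes. Worse, the lemma you hypothesize is also the wrong tool here, so even if you did try to prove it you would be doing unnecessary work.

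The paper's reduction needs no gadgets and no normalization of fractional pricings into canonical ones. Given the VC instance $G=(V,E)$, it adds a pendant $v'$ to each $v\in V$ with budget~$1$, and duplicates each original edge $uv\in E$ into two parallel consumers with budgets $|V|^2$ and $2|V|^2$. The scale separation does all the work: the forward direction is the obvious assignment (price $|V|^2$ to cover vertices and $0$ to $v'$, price $0$ to non-cover vertices and $1$ to $v'$), yielding $2|E||V|^2 + |V| - {\sf VC}$. For the converse, take \emph{any} pricing $p$ of revenue at least $2|E||V|^2$ and let $C=\{v: p(v)>1\}$. If $C$ misses some edge $uv\in E$, that edge has $p(u)+p(v)\le 2$, so its pair of rich consumers contributes at most~$4$, and total revenue drops below $2(|E|-1)|V|^2 + |V| + 4 < 2|E||V|^2$ — contradiction. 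Hence $C$ is a cover, $|C|\ge {\sf VC}$, and the pendant edges contribute at most $|V|-{\sf VC}$, giving $\opt = 2|E||V|^2 + |V| - {\sf VC}$. Note this argument deals with arbitrary fractional pricings directly — your concern about the all-$\tfrac12$ triangle phenomenon is valid in general, but here it is neutralized by the budget hierarchy, not by a per-gadget rounding lemma. If you want to complete your proposal, you should abandon the gadget/normalization framing and instead introduce a polynomial budget gap between ``structural'' edges and ``reward'' edges, then extract a cover from the set of highly-priced vertices by a counting argument.
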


\begin{proof}
We show a reduction from {\sf Vertex Cover} on planar graphs, which was shown to be strongly {\sf NP}-hard in \cite{DBLP:journals/siamam/GareyJ77}. (We note that the same result can be obtained by reducing from the maximum independent set problem on planar graph, which is also strongly {\sf NP}-hard.) Assuming that we are given the instance $G=(V,E)$ of vertex cover, we construct the instance $G'$ of {\sf GVP} as follows. We add a new vertex $v'$ for each $v \in V$, and $v'$ is only connected to $v$ but not to any other vertices; it is possible to do so without violating planarity. For each edge $e \in E$, we make a copy of $e$ and add it to the instance. Call the resulting instance $G'=(V\cup V',E \cup E')$, where $\abs{V'} = \abs{V}$ and $|E'| = |V| +\abs{E}$. We will have two types of consumers: the {\em rich consumers} that correspond to (parallel) edges of $E$ have budget $|V|^2$ and $2 \abs{V}^2$ respectively, and the {\em poor consumers} corresponding to newly added edges of the form $v v'$ have budget of $1$.

The intuition for this construction is that the original edges have much more budget, so optimal solution would not try to miss any of those edges. And therefore the optimal solution would choose the vertex cover of $E$. Now we proceed to the analysis.

We let ${\sf VC}$ denote the size of minimum vertex cover of original instance $G$, and $\opt$ denote the optimal revenue of the pricing problem on the instance $G'$ constructed from $G$ using the above reduction. We claim that $\opt = 2 \abs{E} \abs{V}^2 +\abs{V} - {\sf VC}$, and once we prove this claim, we would be done. We first show that $\opt \geq 2 \abs{E} \abs{V}^2 +\abs{V} -{\sf VC}$. Let $S \subseteq V$ be the set of vertices in an optimal vertex cover of $G$. We define the following price $p$: (i) set the price $p(v) = \abs{V}^2$ and $p(v') = 0$ for each vertex $v \in S$, and (ii) $p(v) = 0$ and  $p(v') =1$ for each $v\not\in S$. Notice that we can collect $2\abs{V}^2$ from each pair of (parallel) rich consumers, resulting in a total of $2\abs{E}\abs{V}^2$, while we can collect $\abs{V} -{\sf VC}$ from the poor consumers.

For the converse, let $p$ be an optimal price for $G'$ and $C=\set{v \in V: p(v)> 1}$, i.e. $C$ is the set of vertices $v$ whose corresponding poor consumers $v v'$ do not have enough money.
\begin{claim}
Set $C$ forms a vertex cover of $G$.
\end{claim}
\begin{proof}
Suppose not. Then there is an edge $uv \in E$ not covered by any vertices in $C$. Therefore, the revenue of $uv$ is at most $p(u) + p(v) \leq 2$. The total revenue from this pricing is at most $2 (\abs{E} - 1) \abs{V}^2 + \abs{V} +4$, which is at most $2\abs{E} \abs{V}^2 - \abs{V}^2$, contradicting the fact that $p$ is an optimal price since we have already proved that $\opt \geq 2 \abs{E} \abs{V}^2$.
\end{proof}
So we know that $\abs{C} \geq {\sf VC}$, and thus the total revenue from poor consumers is at most $\abs{V}- {\sf VC}$. This implies that $\opt \leq 2 \abs{E} \abs{V}^2 + \abs{V} - {\sf VC}$, concluding Theorem~\ref{thm:hardness_planar}. 
\end{proof}

\subsection{Algorithm}
\label{sec: algorithms for bounded genus}

The main result in this section is encapsulated in the following theorem. 

\begin{theorem} 
For any fixed $H$, let $G=(V,E)$ be an $H$-minor free graph. 
Then there is a $(1+O(\epsilon))$-approximation algorithm that runs in time $|E|^{c_H}$ for some constant $c_H$ that depends on $H$. 
\end{theorem}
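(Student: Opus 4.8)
The plan is to combine the bounded-treewidth FPTAS from Section~\ref{sec: bounded treewidth} with Baker's layering technique, as adapted to $H$-minor-free graphs by Eppstein and by the subsequent Demaine--Hajiaghayi framework. First I would fix the approximation parameter $\epsilon$ and choose an integer $\rho = \rho(\epsilon) = \Theta(1/\epsilon)$. The standard result (Eppstein \cite{Eppstein00}; see also the generalization of Baker's method) is that for any fixed $H$ there is a constant $c_H$ so that for every $H$-minor-free graph $G$ one can, in polynomial time, find a layering $V = L_0 \cup L_1 \cup \cdots$ (for a suitable notion of breadth-first layers or more precisely the Demaine--Hajiaghayi decomposition into $\rho+1$ ``levels'') such that for each residue class $i \in \{0,1,\dots,\rho\}$, deleting the vertices in layers $\equiv i \pmod{\rho+1}$ leaves a graph of treewidth at most $O(\rho) = O_\epsilon(1)$. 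Deleting vertices from a {\sf GVP} instance just deletes all incident edges (equivalently, we may price those vertices arbitrarily high and earn nothing from incident consumers), so each of the $\rho+1$ subinstances is again a {\sf GVP} instance on a bounded-treewidth graph.

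The key averaging step is then: each edge $uv\in E$ has its two endpoints in at most two consecutive layers, hence $uv$ is destroyed in at most two of the $\rho+1$ residue classes, so it survives in at least $\rho-1$ of them. Therefore, running the bounded-treewidth FPTAS of Theorem~1.1 (Section~\ref{sec: bounded treewidth}) on each of the $\rho+1$ subinstances with error parameter $\epsilon$ and taking the best of the $\rho+1$ resulting price functions, I claim one gets revenue at least $\frac{\rho-1}{\rho+1}(1-\epsilon)\opt$ on the original instance. The argument: fix the optimal price $p^*$ with revenue $\opt$; for residue class $i$, the restriction of $p^*$ to the $i$-th subinstance collects at least the revenue of $p^*$ on all surviving edges, and summing over $i$ each edge's contribution $p^*(u)+p^*(v)$ is counted at least $\rho-1$ times, so some $i$ yields a subinstance in which $p^*$ collects $\ge \frac{\rho-1}{\rho+1}\opt$, and the FPTAS on that subinstance returns a price within $(1-\epsilon)$ of its optimum. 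One subtlety to handle carefully: when we extend a subinstance price back to $G$ we must price the deleted vertices so as not to hurt surviving edges — setting deleted vertices' prices to something large (larger than any budget) ensures no consumer of a deleted vertex contributes and does not affect the already-counted revenue. Choosing $\rho = \Theta(1/\epsilon)$ makes $\frac{\rho-1}{\rho+1} \ge 1-O(\epsilon)$, and since the subinstances have treewidth $O(\rho)=O_\epsilon(1)$, the running time is $\poly(|V|, (m/\epsilon)^{O(\rho)}) = |E|^{c_H}$ with $c_H$ absorbing the $H$-dependence of the decomposition constant (after first invoking Lemma~\ref{lemma: poly budget} to make budgets polynomially bounded).

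The main obstacle I anticipate is not the averaging, which is routine, but verifying that the layering/decomposition theorem we invoke for general $H$-minor-free graphs genuinely yields bounded-treewidth {\sf GVP} subinstances under \emph{vertex deletion} with the property that each edge is lost in only $O(1)$ classes. For bounded-genus graphs this is essentially Baker/Eppstein directly (BFS layers, delete every $(\rho+1)$-st layer); for general $H$-minor-free graphs one needs the stronger structural decomposition (Demaine--Hajiaghayi, building on Robertson--Seymour), and one must check that the ``edge lost in $\le 2$ classes'' bound still holds — it does, because the decomposition still assigns each vertex to a layer with edges spanning consecutive layers. I would present the bounded-genus case cleanly first (BFS layers of $G$ after a planarizing embedding, bounded treewidth of $\rho+1$ consecutive layers) and then remark that the $H$-minor-free case follows verbatim from the analogous structural theorem, citing \cite{Baker94,Eppstein00}.
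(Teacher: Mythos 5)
Your proposal is correct and follows essentially the same route as the paper: decompose the $H$-minor-free graph into $O(1/\epsilon)$ subinstances of bounded treewidth, run the FPTAS of Section~\ref{sec: bounded treewidth} on each, and argue by averaging over subinstances that at least one retains a $(1-O(\epsilon))$ fraction of $\opt$.

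One imprecision worth fixing is in how you justify the edge-survival bound for general $H$-minor-free graphs. The structural tool the paper actually invokes (Theorem~3.1 of Demaine, Hajiaghayi, and Kawarabayashi~\cite{DemaineHK05}) is a \emph{partition} of $V(G)$ into $k+1$ parts such that any $k$ of them induce a subgraph of treewidth $O(c_H k)$; it is not a BFS layering with residue classes, and there is no guarantee that edges connect ``consecutive'' parts. So your stated justification — ``each edge is destroyed in at most two classes because its endpoints lie in consecutive layers'' — does not literally apply outside the bounded-genus/planar setting. The correct reason is simpler and makes no reference to consecutiveness: an edge has two endpoints, each landing in a single part of the partition, so the edge is deleted in at most two of the subinstances (namely the two obtained by removing those parts), and hence survives in at least $k-1$ of the $k+1$ of them. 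This is exactly the counting in the paper's claim that $\sum_j \opt(H_j) \geq (k-2)\opt$. You correctly sensed this tension in your closing paragraph; the fix is to drop the layered picture for $H$-minor-free graphs and argue directly from the partition, rather than to assert the layer argument ``follows verbatim.'' The rest of your plan — the extension of prices on deleted vertices to something above every budget, the choice $\rho=\Theta(1/\epsilon)$, and the invocation of Lemma~\ref{lemma: poly budget} to control running time — matches the paper's argument.
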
 

It is known that any graph of genus at most $g$ is $H$-minor free for some graph $H$ such that $|V(H)|=O(g)$~\cite{Genus62}. Thus, we have the following.


\begin{corollary} 
For any fixed $g$, let $G=(V,E)$ be graph of genus $g$. 
Then there is a $(1+O(\epsilon))$-approximation algorithm that runs in time $|E|^{O(c_g/\epsilon)}$ for some constant $c_g$ that depends on $g$. 
\end{corollary}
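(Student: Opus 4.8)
The plan is to get the corollary as an essentially immediate consequence of the preceding theorem together with one classical embedding fact, and then just do the quantitative bookkeeping. Recall from \cite{Genus62} that every graph of genus $g$ is $H$-minor-free for a fixed graph $H$ with $|V(H)| = O(g)$; hence a genus-$g$ input $G$ satisfies the hypothesis of the theorem, which already yields a $(1+O(\epsilon))$-approximation in time $|E|^{c_H}$. The only thing left is to verify that the constant $c_H$, which the theorem allows to depend on $H$ (and implicitly on the accuracy parameter), comes out as $O(c_g/\epsilon)$ once one unwinds how the bounded-treewidth {\sf FPTAS} of Section~\ref{sec: bounded treewidth} is invoked inside the proof of the theorem. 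So the real content is the theorem, which I would prove by the Baker--Eppstein ``delete a cheap part, then solve exactly on bounded treewidth'' template.

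First I would normalize via Lemma~\ref{lemma: poly budget}, so all budgets and all prices we ever consider are integers bounded by $P=O(m/\epsilon)$, at a cost of a $(1+\epsilon)$ factor. Then I would apply the edge-partition (contraction) decomposition available for $H$-minor-free graphs: for any integer $p\ge 1$ one can compute in polynomial time a partition $E(G)=E_1\cup\dots\cup E_p$ (disjoint) such that for every $i$ the graph $G-E_i$ has treewidth $O_H(p)$; I would take $p=\lceil 2/\epsilon\rceil$. For the bounded-genus case this step needs no heavy minor machinery: run BFS from a root, assign each vertex its level, and for $i=0,\dots,p-1$ let $E_i$ be the edges with an endpoint on a level $\equiv i \pmod p$. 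Each BFS edge joins two levels differing by at most one, so each edge lies in at most two of the $E_i$; and each component of $G-E_i$ spans fewer than $p$ consecutive levels, which after contracting all lower levels to a single vertex has genus $\le g$ and radius $<p$, hence treewidth $O(gp)$ by the linear local-treewidth property of bounded-genus graphs \cite{Eppstein00}.

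Next comes the averaging, which is clean here because {\sf GVP} is monotone under edge deletion: if for a pricing $p'$ we write $r_e(p')$ for the contribution of edge $e=uv$ (namely $p'(u)+p'(v)$ if $p'(u)+p'(v)\le B_e$, and $0$ otherwise), then restricting $p'$ to any subgraph collects exactly the sum of $r_e(p')$ over the surviving edges, so deleting edges never increases the optimum. Let $p^*$ be optimal for $G$ with revenue $\opt$. Since each edge lies in at most two of the $E_i$ we get $\sum_i \sum_{e\in E_i} r_e(p^*)\le 2\opt$, so some index $i$ has $\sum_{e\in E_i} r_e(p^*)\le 2\opt/p\le \epsilon\,\opt$, whence $\opt(G-E_i)\ge(1-\epsilon)\opt$. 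I would then run the bounded-treewidth {\sf FPTAS} on each of the $p$ graphs $G-E_i$ — each has treewidth $w=O_H(p)=O(c_g/\epsilon)$, so this costs $\poly(|V|,(m/\epsilon)^{w})$ per piece — and output the best pricing found, read back on $G$ (it only forgoes revenue on deleted edges, of which it had none). For the good $i$ this is a $(1+\epsilon)$-approximation of $\opt(G-E_i)\ge(1-\epsilon)\opt$, hence a $(1+O(\epsilon))$-approximation of $\opt$; the total running time is $p\cdot\poly(|V|,(m/\epsilon)^{O(c_g/\epsilon)})=|E|^{O(c_g/\epsilon)}$, as claimed.

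The step I expect to be the main obstacle is the decomposition: pinning down a clean, constructive statement that $E(G)$ of an $H$-minor-free (in particular genus-$g$) graph splits into $p$ parts each of whose complement has treewidth $O_H(p)$, and carefully tracking how the hidden constant scales with $H$ (equivalently with $g$) and with the width parameter $p\asymp 1/\epsilon$, so that the final exponent is genuinely $O(c_g/\epsilon)$ and not, say, $O(c_g/\epsilon^2)$. Everything else is routine: Lemma~\ref{lemma: poly budget} is already proved, Bodlaender's algorithm \cite{Bodlaender96} builds the tree decompositions in linear time for fixed width, and the exact $\poly(|T|,P^k)$ dynamic program is the one from Section~\ref{sec: bounded treewidth}. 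The one conceptual point worth stating explicitly is the monotonicity of {\sf GVP} under edge deletion, since that is exactly what lets the ``remove a cheap part, then solve exactly'' scheme go through with no repair step.
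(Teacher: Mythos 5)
Your proposal is correct, and it recovers the same high-level strategy as the paper (Baker/Eppstein-style decomposition into low-treewidth pieces, averaging to find a near-optimal piece, then the bounded-treewidth {\sf FPTAS}), but the decomposition step is implemented differently. The paper proves the $H$-minor-free theorem by invoking Theorem 3.1 of Demaine, Hajiaghayi, and Kawarabayashi, which gives a \emph{vertex} partition $V(G)=V_1\cup\dots\cup V_{k+1}$ such that any $k$ of the parts induce a graph of treewidth $O_H(k)$; instance $H_j = G - V_j$ is solved, and the averaging $\sum_j \opt(H_j)\ge (k-2)\opt$ is carried out by counting, for each optimal edge, in how many of the $H_j$ it survives (internal edges survive $k-1$ times, crossing edges $k-2$ times). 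You instead construct an \emph{edge} covering $E_1,\dots,E_p$ directly from BFS levels taken modulo $p$, observe that each edge lands in at most two of the $E_i$ (so it is really a multiplicity-2 covering, not a partition as you first state --- a harmless slip since your averaging correctly uses the factor $2$), and bound the treewidth of each $G-E_i$ by the linear local-treewidth property of bounded-genus graphs. For the corollary itself, your route is arguably more self-contained: it avoids the general $H$-minor-free partition theorem and uses only the classical Eppstein-style ``bounded radius $\Rightarrow$ bounded treewidth'' fact for genus-$g$ graphs, which is exactly what the corollary needs. The trade-off is that your proof does not, as written, establish the paper's \emph{theorem} for arbitrary $H$-minor-free graphs --- you gesture at an ``edge-partition decomposition available for $H$-minor-free graphs'' without a reference, whereas the vertex-partition form you would actually need there is Demaine et al.'s result, which is what the paper uses. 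Since the statement being proved is only the bounded-genus corollary, this does not create a gap; it just means your argument is tailored to that case rather than proving the more general theorem along the way.

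One further small point worth being precise about: you explicitly invoke monotonicity of {\sf GVP} under edge deletion to justify the ``delete a cheap layer, solve exactly, read back the prices'' template. This is correct and is the key structural feature that makes Baker's scheme apply here with no repair step; the paper uses the same fact implicitly inside its counting claim $\sum_j \opt(H_j)\ge(k-2)\opt$.
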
 

The rest of this section is devoted to proving the above theorem.
The idea of the proof closely follows the standard techniques that decompose any $H$-minor free graph 
into many graphs of small treewidth.  
%
We will be using the following theorem, due to Demaine et.al.

\begin{theorem}(Theorem 3.1 in~\cite{DemaineHK05})
For a fixed $H$, there is a constant $c_H$ such that, for any integer $k \geq 1$, and for every $H$-minor free graph $G$, the vertices of $G$ can be partitioned into $k+1$ sets such that any $k$ of the sets induce a graph of treewidth at most $c_H k$. Furthermore, such a partition can be found in polynomial time.
\end{theorem}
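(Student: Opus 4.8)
I would prove the statement by reducing it, in two short steps, to the \emph{linear local treewidth} property of $H$-minor-free graphs, and then establish that property through the Robertson--Seymour graph-minor structure theorem. Step one is the partition itself. Fix any root $v_0$, run breadth-first search, and let $L_j$ be the set of vertices at distance exactly $j$ from $v_0$; for the given $k$ define the $k+1$ classes $S_i := \bigcup_{j \equiv i \pmod{k+1}} L_j$ for $i = 0, \dots, k$. This is computed in linear time, so the ``polynomial time'' clause is free. Choosing $k$ of the sets means deleting one class $S_i$, and every connected component $X$ of $G - S_i$ lies inside $k$ consecutive layers $L_a \cup \cdots \cup L_{a+k-1}$: a connected subgraph avoiding the removed layers cannot cross a deleted layer, since in a BFS layering consecutive vertices of any path differ by at most one in layer index. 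Hence it suffices to show that \emph{an $H$-minor-free graph whose vertices lie in $k$ consecutive BFS layers has treewidth $O_H(k)$.}

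\textbf{From few layers to small radius.} For such a component $X \subseteq L_a \cup \cdots \cup L_{a+k-1}$, form $X^{+}$ from $G[L_0 \cup \cdots \cup L_{a+k-1}]$ by contracting the connected set $L_0 \cup \cdots \cup L_{a-1}$ to a single vertex $\rho$. Then $X$ is a subgraph of $X^{+}$, the graph $X^{+}$ is still $H$-minor-free (subgraphs and contractions preserve minor-freeness), and $X^{+}$ has radius at most $k$ from $\rho$ (follow the BFS path up to layer $L_a$, then one edge into $\rho$). So $\operatorname{tw}(X) \le \operatorname{tw}(X^{+})$, and the whole theorem follows once we prove the \emph{linear local treewidth} theorem: \emph{every $H$-minor-free graph of radius $r$ has treewidth at most $c_H\, r$}, where $c_H = O_H(1)$ (the extra additive/multiplicative constant is absorbed since $k \ge 1$).

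\textbf{Linear local treewidth via the structure theorem.} Let $G'$ be $H$-minor-free of radius $r$. I would invoke the Robertson--Seymour structure theorem to write $G'$ as a clique-sum, organised along a tree, of pieces $G'_1, \dots, G'_m$, each $h$-almost-embeddable for some $h = h(H)$: each $G'_t$ has at most $h$ apex vertices whose removal leaves a graph drawn in a surface of Euler genus at most $h$ up to at most $h$ vortices of path-width at most $h$. Two observations finish the reduction. (i) Clique-sums compose treewidth as a maximum: the gluing clique $C$ (with $|C| \le h$) lies in some bag of any tree decomposition of each part, so the two decompositions can be joined along one tree edge without increasing the width, and iterating this gives $\operatorname{tw}(G') \le \max_t \operatorname{tw}(G'_t)$. (ii) A single almost-embeddable piece of bounded radius has treewidth $O_H(r)$: deleting the at most $h$ apices costs $+h$ in treewidth; each vortex of path-width at most $h$ is absorbed into the tree decomposition at cost $O(h^2)$; and the residual genus-at-most-$h$ surface graph is cut open along $O(h)$ short noncontractible curves (Eppstein's technique), yielding a planar graph of radius $O_H(r)$, whose treewidth is $O_H(r)$ by the elementary planar bound (a planar graph of radius $\rho$ is $(\rho+1)$-outerplanar once the BFS root is placed on the outer face, hence has treewidth $\le 3\rho + O(1)$). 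Chaining (i) and (ii) gives $\operatorname{tw}(G') = O_H(r)$, which is exactly what the previous step needs.

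\textbf{Main obstacle.} The heavy lifting is entirely in the last paragraph. The genuinely delicate point is that the clique-sum pieces $G'_t$ are \emph{not} subgraphs of $G'$: they carry virtual edges filling in the gluing cliques, so their radii, in the metric that governs the surface drawing, are not automatically $O(r)$. One must argue that each torso, after apex deletion, still has radius $O_H(r)$, using that torso distances never exceed distances in $G'$, that along the clique-sum tree every virtual clique only ``shortcuts'' a pair of vertices already within $O(h)$ in $G'$, and a pushing argument bounding how far from $v_0$ a piece can sit. The precise accounting of how apex and vortex removal perturb the radius, and the correct lemma for absorbing a vortex into a tree decomposition, are the other technical hot spots. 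By contrast the BFS colouring, the contraction of the second paragraph, the elementary planar radius-to-treewidth bound, the fact that clique-sums compose as a maximum, and the linear running time are all routine.
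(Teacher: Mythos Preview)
The paper does not prove this theorem; it is quoted verbatim from Demaine--Hajiaghayi--Kawarabayashi and used as a black box. So there is no ``paper's own proof'' to compare against, but it is still worth saying something about your sketch, because it contains a real error, not merely a rough spot.

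Your plan reduces everything to the claim that \emph{every $H$-minor-free graph of radius $r$ has treewidth at most $c_H r$}. That claim is false whenever $H$ is not an apex graph. By Eppstein's characterisation, a minor-closed class has bounded local treewidth if and only if it excludes some apex graph; since the class of apex graphs is itself minor-closed, $H$-minor-free graphs contain \emph{all} apex graphs as soon as $H$ is not apex. Concretely, take $H=K_6$ (not an apex graph, since $K_5$ is nonplanar) and let $G'$ be an $n\times n$ grid plus a universal vertex. Then $G'$ is an apex graph, hence $K_6$-minor-free, has radius $1$, and has treewidth $n+1$. Your BFS partition on this $G'$ puts the apex in one layer and the whole grid in the next, so deleting the apex class leaves a graph of treewidth $n$; the scheme fails already at $k=1$. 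You flag ``how apex removal perturbs the radius'' as a technical hot spot, but it is not a perturbation one can account for: removing a single apex can blow the radius up from $1$ to $n$, and with it the treewidth bound you need.

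The actual proof in Demaine--Hajiaghayi--Kawarabayashi reverses the order of your two main steps: they apply the Robertson--Seymour structure theorem \emph{first}, decomposing $G$ into clique-sums of almost-embeddable pieces, and only \emph{then} do the layering, separately inside each piece after the apices have been set aside. The bounded number of apices per piece are added to every bag of that piece's tree decomposition, contributing $O_H(1)$ to the width; the layering controls the surface-plus-vortices part, where Eppstein's bounded-genus argument genuinely applies. The partitions of the pieces are then stitched together along the clique-sum tree. So the BFS-layer idea and the structure theorem are both present, but composing them in your order (layer globally, then structure-theorem inside a slab) cannot work for non-apex $H$.
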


We choose the parameter $k= \ceil{1/\epsilon}$ and invoke the theorem to partition the vertex set $V(G)$ into $V(G) = \bigcup_{i=1}^k V_i$. We then create $k$ instances $H_1,\ldots, H_k$ where instance $H_j$ is obtained by removing vertices in $V_j$ and their adjacent edges from $G$. From the theorem, each graph $H_j$ has treewidth at most $O(1/\epsilon)$. 
The following claim asserts that one of these subgraphs $H_j$ admits a near-optimal pricing solution. 

\begin{claim}
$\sum_{j=1}^k \opt(H_j) \geq (k-2) \opt$ where $\opt(H_j)$ denotes the optimal value for the instance $H_j$. 
\end{claim}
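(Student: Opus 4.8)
The plan is to fix an optimal price function $p^*$ for the original instance $G$, so that its revenue equals $\opt$, and to argue that its restriction to the surviving vertices of the $H_j$'s collects, on average over $j$, almost all of $\opt$. For each $j \in \{1,\dots,k\}$, let $p_j$ be the restriction of $p^*$ to $V(H_j) = V(G)\setminus V_j$, and let $\mathrm{rev}_{H_j}(p_j)$ denote the revenue that $p_j$ collects on $H_j$. Since $p_j$ is a feasible pricing of $H_j$ we have $\opt(H_j)\ge \mathrm{rev}_{H_j}(p_j)$, so it suffices to prove $\sum_{j=1}^k \mathrm{rev}_{H_j}(p_j) \ge (k-2)\opt$.

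The key point is a double counting over the edges of $G$. For $e = uv \in E(G)$ write $r_e = p^*(u)+p^*(v)$ if $p^*(u)+p^*(v)\le B_e$ and $r_e = 0$ otherwise, so that $\opt = \sum_{e\in E(G)} r_e$. The edge $e$ is present in $H_j$ precisely when neither of its endpoints was deleted, i.e.\ when $u\notin V_j$ and $v\notin V_j$; and whenever $e$ is present, $p_j$ prices its endpoints exactly as $p^*$ does, so $e$ contributes exactly $r_e$ to $\mathrm{rev}_{H_j}(p_j)$. Because $\{V_1,\dots,V_k\}$ is a \emph{partition} of $V(G)$, each endpoint of $e$ lies in exactly one part, so $e$ fails to survive in at most two of the subgraphs $H_1,\dots,H_k$; hence $e$ survives in at least $k-2$ of them.

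Putting these observations together,
\[
\sum_{j=1}^k \mathrm{rev}_{H_j}(p_j) \;=\; \sum_{e\in E(G)} r_e\cdot \bigl|\{\, j : e \text{ survives in } H_j\,\}\bigr| \;\ge\; (k-2)\sum_{e\in E(G)} r_e \;=\; (k-2)\,\opt ,
\]
which is the claim. I expect no real obstacle beyond this bookkeeping; the only points that need a moment's care are that the $V_i$ form a genuine partition (so that each edge is missing from at most two instances, not more) and that restricting a feasible pricing to an induced subgraph leaves every surviving edge feasible with an unchanged contribution — both immediate from the definitions. It is worth noting that the bound is essentially tight whenever many edges of $G$ run between distinct parts, which is why one cannot hope to push the factor up to $(k-1)$ in general.
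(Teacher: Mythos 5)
Your proof is correct and takes essentially the same approach as the paper: fix an optimal pricing $p^*$, restrict it to each $H_j$, and observe that every revenue-contributing edge is deleted from at most two of the $k$ subgraphs (once per endpoint's part), hence survives in at least $k-2$ of them. The paper phrases this by splitting $E^*$ into intra-part edges (surviving $k-1$ times) and cross-part edges (surviving $k-2$ times) and counting each class separately, whereas you use the uniform lower bound of $k-2$ for all edges, which is slightly cleaner but the same argument.
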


Before proving the claim, we argue that it does imply the PTAS. 
Let $j^*$ be the index such that $\opt(H_{j^*})$ is maximized, so we have $\opt(H_{j^*}) \geq (1-O(\epsilon)) \opt$. Since the treewidth of $H_{j^*}$ is at most $O(1/\epsilon)$, we can compute the $(1+\epsilon)$-approximate pricing in $H_{j^*}$ in time $m^{O(1/\epsilon)}$ by applying Lemma~\ref{lemma: poly budget} and the algorithm mentioned in Section~\ref{sec: bounded treewidth}.

\begin{proof}
Let $p^*$ be an optimal price for graph $G$ that collects the revenue of $\opt$. For any edge $e=uv$, let $p^*(e)=p^*(u)+p^*(v)$. Let $E^* \subseteq E(G)$ be the subset of edges $e$ such that $p^*(e) \leq B_e$. 
These are the edges that have positive contribution to the revenue $\opt$. 
For each subset of vertices $S \subseteq V$, denote by $\delta_{E^*}(S)$ the set of edges in $E^*$ with exactly one endpoint in $S$, and $E^*(S)$ the set of edges in $E^*$ with both endpoints in $S$. 

For each graph $H_j$, if we set the price $p^*$ (i.e. the optimal price induced on $H_j$), we would be able to collect the revenue of at least $r_j = \sum_{i \neq j} \sum_{e \in E^*(V_i) } p^*(e)  + \sum_{ e \in \tilde E \setminus \delta_{E^*}(V_j) } p^*(e) $, where $\tilde E$ denotes the set of edges connecting two vertices in different sets $V_i$. 
Summing over all $j$, we argue that $\sum_{j=1}^k r_j \geq (k-2) \opt$: Notice that each edge in $E^*(V_j)$ contributes exactly $k-1$ times in the sum for each $j$, while each edge in $\delta_{E^*}(V_j)$ contributes exactly $k-2$ times (edge $uv$ connecting $V_i$ to $V_{i'}$ contributes to all terms except for $r_i$ and $r_{i'}$). 
\end{proof}

\danupon{This question is for later. Don't delete this. Hypergraphs with bounded genus? We can use the notion of genus of hypergraph as in \cite{Mazoit10}. In this case, we can get a {\sf PTAS} as well?}



\section{Integrality gap of the Sherali-Adams relaxation on bounded treewidth and bounded genus graphs}
\label{sec: Sherali-Adams}
We describe a family of LP relaxation, denoted by (LP-$r$), and a rounding algorithm that computes an optimal solution given an optimal fractional solution for (LP-$r$), provided that the treewidth of the input graph is at most $r/2$. One can use a standard procedure to check that the constraints of (LP-$r$) can be generated by applying $O(r)$ rounds of Sherali-Adams hierarchy on a natural LP relaxation. 
For completeness, we provide the LP description and the proof that Sherali-Adams can generate (LP-$r$) in Appendix~\ref{appendix: sherali-adams}.

\paragraph{Terms and notation:} Let $P$ be the maximum possible price obtained from Lemma~\ref{lemma: poly budget}. Given an input graph $G=(V,E)$, let $p,p': V \rightarrow [P]$ (or equivalently $p,p' \in [P]^V$) be price functions that assign prices to vertices in graph $G$. We say that two functions $p, p'$ {\em agree} on $S$ if and only if $p(v) = p'(v)$ for all $v \in S$. For any subset of vertices $S \subseteq V$, a {\em restriction} of $p$ on set $S$, denoted by $p|_S$, is the (unique) function $\tilde{p} \in [P]^S$ that agrees with $p$ on $S$.

\paragraph{LP relaxation:} For each set $S \subseteq V$, and assignment $\alpha \in [P]^S$, we introduce an LP variable $y(S,\alpha)$ which is supposed to be an indicator that $p$ agrees with $\alpha$ on $S$, i.e. $y(S,\alpha) =1$ if the solution function $p$ assigns the value $p(v) = \alpha(v)$ for all $v \in S$, and $y(S, \alpha) = 0$ otherwise. For any two assignments $\alpha \in [P]^X$ and $\beta \in [P]^Y$ such that $X$ and $Y$ are disjoint, we write $\alpha \cup \beta$ to represent the assignment $\gamma \in [P]^{X \cup Y}$ that agrees with $\alpha$ on $X$ and with $\beta$ on $Y$. We use the following LP relaxation with $r \geq 2$.
\noindent\begin{align}
\label{eq:LP}\tag{LP-$r$}
\footnotesize
\hspace{-5pt}
\noindent\begin{aligned}
\lefteqn{\max \quad \sum_{e=(u,v) \in E} \sum_{\alpha \in [P]^{\set{u,v}}: \atop \alpha(u) + \alpha(v) \leq B_e} (\alpha(u) + \alpha(v)) y(\set{u,v}, \alpha)}\hphantom{\max}  \\
{\rm s.t.} && \sum_{\beta \in [P]^{T}}  y(S \cup T, \alpha \cup \beta)  &=  y(S, \alpha)  &&   \forall S,T \subseteq V: |S \cup T| \leq r, S \cap T = \emptyset, \alpha \in [P]^S\\
&& 0 \leq y(S, \alpha)  &\leq 1 &&\forall S: \abs{S} \leq r, \forall \alpha \in [P]^S\\
&& y(\emptyset, \emptyset) &= 1
\end{aligned}
\end{align}
%
%
The size of the LP is $P^{O(r)}$. A natural way to view this LP is to treat the variables $\set{y(S, \alpha)}_{\alpha}$ for a fixed set $S \subseteq V$ as a probability distribution where the value of $y(S, \alpha)$ represents the probability that the vertices in set $S$ are assigned price $\alpha$. Notice that we have the constraint $1=y(\emptyset, \emptyset) =  \sum_{\alpha \in [P]^S} y(S, \alpha)$.

We first argue that this LP is indeed a relaxation for {\sf GVP}. Let $p^*$ be an optimal (integral) price function. For any set $S \subseteq V$ and any assignment $\alpha$ for $S$, we assign $y(S,\alpha) =1$ if and only if $p^*(v) = \alpha(v)$ for all $v \in S$. Consider any two subsets $S, T: S \cap T = \emptyset$ and any assignment $\alpha \in [P]^S$, and notice that $y(S, \alpha) = 1$ if and only if there exist $\beta \in [P]^T$ such that $y(S \cup T, \alpha \cup \beta) = 1$. Therefore, the above solution satisfies all constraints, and the objective value equals to the total revenue collected by $p^*$.

Now let $G$ be an input graph with treewidth $k$. First, we solve (LP-$(k+1)$) and denote the optimal objective value by $\opt$. We describe below a randomized algorithm that returns price function $p$ with expected total profit of $\opt$.

\paragraph{Algorithm description:} Denote by $(T, \cV)$ the tree decomposition of graph $G$. Initially we have price function $p$ where $p(v)$ is undefined for all $v \in V$. We process each element of the tree $t \in T$ in order defined by the tree $T$ from root to leaves, ensuring that whenever any node $t \in T$ is processed, all ancestors of $t$ have already been processed. When the algorithm processes the node $t \in T$, it assigns the prices to vertices in $V_t$, defining the values $p(v)$ for all $v \in V_t$. It is clear that function $p$ will be defined for all $v \in V$ in the end.


Let $t \in T$ be the current tree node that is being processed. If $t$ is the root of the tree, we assign the prices to vertices in $V_t$ according to the probability distribution $\set{y(V_t, \alpha)}_{\alpha \in [P]^{V_t}}$. Otherwise, let $t'$ be the parent of $t$. Suppose $\beta \in [P]^{V_{t'}}$ is a price assignment of vertices $V_{t'}$. Define $X'= V_t \setminus V_{t'}$ to be the set of vertices in $V_t$ whose prices have not been assigned. We pick a random assignment $\alpha \in [P]^{X'}$ using the distribution $y(V_{t'} \cup X', \beta \cup \alpha)/y(V_{t'}, \beta)$, i.e. for each assignment $\alpha' \in [P]^{X'}$, $\pr{}{\alpha = \alpha'} = y(V_{t'} \cup X', \beta \cup \alpha')/y(V_{t'}, \beta)$. Notice that this is a valid probability distribution because of the constraint
\[\sum_{\alpha' \in [P]^{X'}} y(V_{t'} \cup X', \beta \cup \alpha') = y(V_{t'}, \beta)\]

In the end of the algorithm, the following property holds.

\begin{lemma}
\label{lemma: sherali-adams rounding}
For any $t \in T$, the price $p \mid_{V_t}$ has the same distribution as $\set{y(V_t, \alpha)}_{\alpha \in [P]^{V_t}}$. In other words, for each tree node $t \in T$, we have
\[(\forall \alpha \in [P]^{V_t}) \mbox{  } \pr{}{p|_{V_t} \mbox{ agrees with } \alpha } = y(V_t, \alpha) \]
\end{lemma}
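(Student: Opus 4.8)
The plan is to prove the statement by induction on the depth of the tree node $t \in T$, following the order in which the algorithm processes nodes (root first, then down toward the leaves). The base case is the root $t_0$: by construction the algorithm assigns prices to $V_{t_0}$ directly according to the distribution $\{y(V_{t_0},\alpha)\}_\alpha$, so $\Pr[p|_{V_{t_0}} \text{ agrees with } \alpha] = y(V_{t_0},\alpha)$ trivially.

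For the inductive step, let $t$ be a non-root node with parent $t'$, and assume the claim holds for $t'$, i.e.\ $\Pr[p|_{V_{t'}} \text{ agrees with } \beta] = y(V_{t'},\beta)$ for all $\beta \in [P]^{V_{t'}}$. Write $X' = V_t \setminus V_{t'}$ and $W = V_t \cap V_{t'}$, so $V_t = W \cup X'$ disjointly. Fix a target assignment $\alpha \in [P]^{V_t}$; decompose it as $\alpha = \alpha_W \cup \alpha_{X'}$ with $\alpha_W = \alpha|_W$ and $\alpha_{X'} = \alpha|_{X'}$. The key point is that the event ``$p|_{V_t}$ agrees with $\alpha$'' factors through the parent: by the running-intersection property of the tree decomposition, once the prices on $V_{t'}$ are fixed, the prices the algorithm assigns to $X'$ depend only on $p|_{V_{t'}}$ (in fact only on $p|_W$, but I only need dependence on $V_{t'}$). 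So I would condition on the value of $p|_{V_{t'}}$. Since $W \subseteq V_{t'}$, the event ``$p|_{V_t}$ agrees with $\alpha$'' forces $p|_{V_{t'}}$ to agree with $\alpha_W$ on $W$; hence
\[
\Pr[p|_{V_t} \text{ agrees with } \alpha]
= \sum_{\beta \in [P]^{V_{t'}}:\, \beta|_W = \alpha_W} \Pr[p|_{V_{t'}} \text{ agrees with } \beta]\cdot \Pr[\text{algorithm picks } \alpha_{X'} \text{ on } X' \mid p|_{V_{t'}}=\beta].
\]
By the induction hypothesis the first factor is $y(V_{t'},\beta)$, and by the algorithm's definition the second factor is $y(V_{t'} \cup X', \beta \cup \alpha_{X'})/y(V_{t'},\beta)$ (this is well-defined whenever $y(V_{t'},\beta)>0$; terms with $y(V_{t'},\beta)=0$ contribute $0$ to the induction hypothesis anyway and can be dropped). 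The two cancel, leaving $\sum_{\beta:\,\beta|_W=\alpha_W} y(V_{t'} \cup X', \beta \cup \alpha_{X'})$.

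It remains to identify this sum with $y(V_t,\alpha) = y(W \cup X', \alpha_W \cup \alpha_{X'})$. Write $Z = V_{t'} \setminus W$, so $V_{t'} \cup X' = W \cup Z \cup X'$ and a $\beta$ with $\beta|_W = \alpha_W$ is exactly $\alpha_W \cup \gamma$ for $\gamma \in [P]^Z$ ranging freely. Then the sum is $\sum_{\gamma \in [P]^Z} y(W \cup X' \cup Z,\, \alpha_W \cup \alpha_{X'} \cup \gamma)$, which is precisely the left-hand side of the Sherali–Adams consistency constraint of (LP-$(k+1)$) with $S = W \cup X' = V_t$, $T = Z$, and assignment $\alpha$ on $S$ — provided $|S \cup T| = |V_{t'} \cup X'| = |V_{t'} \cup V_t| \le r = k+1$. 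This size bound is the one genuinely delicate point: it is not automatic that $|V_{t'} \cup V_t| \le k+1$, so I would invoke (or first establish, via a standard preprocessing of the tree decomposition into a \emph{nice} tree decomposition in which each child $V_t$ differs from its parent $V_{t'}$ by at most one vertex, or more simply the observation that we may assume $V_{t'} \cup X' = V_{t'} \cup V_t$ has size at most $k+1$ because the algorithm only ever needs the constraint for such pairs) the fact that the relevant sets appearing in the rounding all have size at most $r$. Granting that, the consistency constraint gives exactly $y(V_t,\alpha)$, completing the induction. The main obstacle is thus purely bookkeeping: making sure every pair $(S,T)$ whose constraint we invoke satisfies $|S \cup T| \le r$, which is why one works with (LP-$(k+1)$) and a tree decomposition of width $k$ (so $|V_t| \le k+1$), together with the reduction to a nice tree decomposition so that $X' = V_t \setminus V_{t'}$ and $Z = V_{t'} \setminus V_t$ keep all invoked sets within size $k+1$.
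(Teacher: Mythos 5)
Your proof is correct and follows the same inductive skeleton as the paper's: induction on depth in the tree, base case at the root, conditioning on the parent. There are, however, two places where your write-up is more careful than the paper's own, and one place where you rightly sense a subtlety the paper glosses over.

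First, the conditioning variable. The paper conditions directly on $p|_{V_t \cap V_{t'}}$ and asserts that the conditional probability of agreeing with $\alpha$ is $y(V_t,\alpha)/y(V_{t'}\cap V_t,\beta)$; it then sums over \emph{all} $\beta\in[P]^{V_{t'}\cap V_t}$ and writes $\sum_\beta y(V_t,\alpha)=y(V_t,\alpha)$, which is only literally true once one has observed that every term with $\beta\neq\alpha|_{V_{t'}\cap V_t}$ vanishes — an observation the paper omits. You instead condition on the full parent bag $p|_{V_{t'}}$, which is exactly the variable the algorithm samples against, restrict the sum to $\beta$ with $\beta|_W=\alpha_W$, and explicitly dispose of the $y(V_{t'},\beta)=0$ terms. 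This directly yields $\sum_{\gamma\in[P]^Z} y(V_t\cup Z,\alpha\cup\gamma)$ and then one application of the consistency constraint marginalizes out $Z$. Mechanically it is the same argument, but yours spells out the step the paper hides inside ``The rest is simply a calculation.''

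Second, the size bound you flag is a genuine point and the paper really does need it: whether one conditions on $V_{t'}\cap V_t$ (as the paper does) or on $V_{t'}$ (as you do), the consistency constraint that is ultimately invoked involves the set $V_t\cup V_{t'}$, so one needs $|V_t\cup V_{t'}|\le r$. The paper accounts for this at the start of the section by stating the rounding works ``provided that the treewidth of the input graph is at most $r/2$,'' i.e.\ $r\ge 2(k+1)$ suffices since $|V_t\cup V_{t'}|\le 2(k+1)$; it then loosely writes ``solve (LP-$(k+1)$)'' in the next paragraph, which does not literally meet that bound. Your fix — pass to a nice tree decomposition so that $|V_t\setminus V_{t'}|\le 1$ and hence $|V_t\cup V_{t'}|\le k+2$ — is the cleaner way to make $r=k+O(1)$ honest, and is the kind of preprocessing the paper implicitly assumes. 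So: same route, with your bookkeeping being sounder than the paper's own write-up.
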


\begin{proof}
We prove by induction on the ordering of tree nodes by tree $T$. For the root of the tree, the probability that $p|_{V_t}$ equals $\alpha$ is exactly $y(V_t,\alpha)$. Now we consider any non-root tree node $t$, and assume that the lemma holds for the parent tree node $t' \in T$. For any assignment $\alpha \in [P]^{V_t}$, we have
\begin{eqnarray*}
\pr{}{p\mid_{V_t} \mbox{ agrees with } \alpha } &=& \sum_{\beta \in [P]^{V_{t'} \cap V_{t}}}  \pr{}{p \mid_{V_t \cap V_{t'}} = \beta } \pr{}{p \mbox{ agrees with } \alpha  \mid p|_{V_{t'}\cap V_t} = \beta }   \\
&=& \sum_{\beta}  y(V_t \cap V_{t'},\beta) \frac{y(V_t , \alpha )}{y(V_{t'}\cap V_t,\beta)} \\
&=& \sum_{\beta} y(V_t, \alpha ) \\
&=& y(V_t, \alpha)
\end{eqnarray*}

Note that the second line follows from the first line by the induction hypothesis. The rest is simply a calculation.
\end{proof}

Now assuming the lemma, we have the following corollary. 
\begin{corollary} 
The expected profit collected by the algorithm is at least $\opt$.  
\end{corollary}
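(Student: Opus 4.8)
The plan is to compute the expected revenue collected by the algorithm edge-by-edge and show that it matches the LP objective term-by-term. The key observation is that the LP objective is
\[\sum_{e=(u,v)\in E}\ \sum_{\substack{\alpha\in[P]^{\{u,v\}}:\\ \alpha(u)+\alpha(v)\le B_e}}(\alpha(u)+\alpha(v))\,y(\{u,v\},\alpha),\]
so it suffices to argue that for each edge $e=(u,v)$, the expected revenue the algorithm collects from $e$ equals the inner sum. First I would invoke property~(2) of the tree decomposition: both endpoints $u,v$ of $e$ lie together in some bag $V_t$. Fix such a $t$. By Lemma~\ref{lemma: sherali-adams rounding}, the random price $p|_{V_t}$ produced by the algorithm has distribution exactly $\{y(V_t,\gamma)\}_{\gamma\in[P]^{V_t}}$; restricting to the pair $\{u,v\}\subseteq V_t$ and using the marginalization constraint of (LP-$r$) (with $S=\{u,v\}$, $T=V_t\setminus\{u,v\}$, valid since $|V_t|\le k+1\le r$), the induced distribution on $(p(u),p(v))$ is exactly $\{y(\{u,v\},\alpha)\}_{\alpha\in[P]^{\{u,v\}}}$.

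Given that, the expected revenue from edge $e$ is
\[\mathbf{E}\!\left[(p(u)+p(v))\cdot\mathbf{1}[p(u)+p(v)\le B_e]\right]=\sum_{\substack{\alpha:\ \alpha(u)+\alpha(v)\le B_e}}(\alpha(u)+\alpha(v))\,\Pr[(p(u),p(v))=\alpha],\]
and by the previous paragraph $\Pr[(p(u),p(v))=\alpha]=y(\{u,v\},\alpha)$, which is precisely the contribution of $e$ to the LP objective. Summing over all $e\in E$ and using linearity of expectation gives that the expected total profit equals $\opt$, the optimal value of (LP-$(k+1)$). Since (LP-$(k+1)$) is a relaxation of {\sf GVP}, $\opt$ is at least the integral optimum, which yields the corollary as stated.

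The only real subtlety — the ``hard part'' — is making sure the marginal of $p|_{V_t}$ on the sub-pair $\{u,v\}$ is genuinely $y(\{u,v\},\cdot)$; this needs both Lemma~\ref{lemma: sherali-adams rounding} and the consistency/marginalization constraints of the Sherali--Adams LP, and it is exactly here that we use that the treewidth $k$ satisfies $k+1\le r$ so that the relevant constraint (over a set of size $|V_t|\le k+1$) is present in (LP-$r$). A minor point worth a sentence: the choice of bag $t$ containing $\{u,v\}$ is immaterial, because the marginal $y(\{u,v\},\cdot)$ is a single well-defined vector of LP variables independent of which bag we marginalized from — the consistency constraints guarantee all such marginals agree. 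Everything else is the routine expectation calculation indicated above, plus the relaxation argument already spelled out in the text preceding the algorithm description.
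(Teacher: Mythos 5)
Your proof is correct and follows essentially the same approach as the paper's: compute expected revenue edge-by-edge, use the tree-decomposition property that each edge lies in some bag, invoke Lemma~\ref{lemma: sherali-adams rounding}, and marginalize from the bag distribution down to the pair $\{u,v\}$. You are just a bit more explicit about where the marginalization constraint of (LP-$r$) is used and that the choice of bag is immaterial, which the paper leaves implicit.
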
 
\begin{proof} 
For each customer edge $uv \in E$, the expected profit made by $uv$ equals to
\[\sum_{\alpha: \alpha(u)+\alpha(v) \leq B_e} (\alpha(u) + \alpha(v)) \pr{}{p \mbox{ agrees with } \alpha} \]
Since both end vertices of each edge $e=(u, v)$ belong to some set $V_t$, from the Lemma~\ref{lemma: sherali-adams rounding}, the distribution of $p|_{\set{u,v}}$ is the same as that of $\set{y(\set{u,v}, \alpha)}_{\alpha \in [P]^{\set{u,v}}}$. So the probability term can be replaced by $y(\set{u,v}, \alpha)$.
\end{proof}

\paragraph{Derandomization:} We note that the above algorithm can be derandomized by the method of conditional expectation. Alternatively, it can also be easily derandomized using the fact that {\em any} solution can be obtained from the above randomized algorithm with positive probability will give a profit $\opt$. This is due to the above corollary which says that the random assignment chosen by the algorithm is optimal in expectation, but no assignment is better than optimal; therefore, any assignment chosen with positive probability will be optimal. Thus, we can just propagate the assignments through the tree decomposition using any assignment in the support of the LP (or equivalently, in the support of the distribution suggested by the rounding).

\paragraph{Bounded genus graphs:} One can show that the integrality gap of (LP-$r$) is at most $1+O(\epsilon)$ for $r= O(g/\epsilon)$. The proof follows along the same line as in Section~\ref{sec: algorithms for bounded genus}, but we work with LP solution instead. We sketch it here for completeness.
First, we need the following theorem, which is a generalization of the Sherali-Adams rounding algorithm presented earlier, whose proof can be obtained by a trivial modification.   

\begin{theorem}
\label{thm:SA-rounding}
Let $G$ be any graph and $\set{y(S,\alpha)}$ be a feasible solution for (LP-$r$) on $G$. 
Also, let $G'$ be a subgraph of $G$ such that $G'$ has treewidth at most $r-1$.
Then, there is a polynomial time algorithm that collects the revenue of at least $\opt_{LP-r}(G')$.  
\end{theorem}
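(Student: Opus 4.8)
The plan is to adapt the Sherali-Adams rounding algorithm from the previous part of this section, but instead of insisting that the rounded solution have expected revenue equal to the full LP optimum $\opt_{LP\text{-}r}(G)$, I would only demand that it recover the revenue that the (restricted) LP solution attributes to edges of $G'$. The key observation is that the rounding algorithm from Lemma~\ref{lemma: sherali-adams rounding} never uses the structure of $G$ outside the tree decomposition through which it propagates; it only uses the LP variables $y(S,\alpha)$ for sets $S$ appearing as bags (or unions of adjacent bags) of that tree decomposition, together with the marginalization constraints of (LP-$r$) among those sets. So first I would take a tree decomposition $(T,\cV)$ of $G'$ of width at most $r-1$ (which exists by hypothesis), so that every bag $V_t$ has $|V_t| \le r$ and every union $V_{t'} \cup X'$ of a bag with part of a child bag also has size at most $r$; these are exactly the sets on which the rounding needs valid marginal and conditional distributions.

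Next I would run the same top-to-bottom propagation as in the randomized algorithm: assign the root bag $V_t$ according to $\{y(V_t,\alpha)\}_\alpha$, and at each non-root node $t$ with parent $t'$ sample the prices of $X' = V_t \setminus V_{t'}$ from $y(V_{t'}\cup X', \beta \cup \alpha)/y(V_{t'},\beta)$. Crucially, all these quantities are well-defined and sum to one because the relevant sets have size $\le r$ and (LP-$r$) contains the corresponding $0 \le y \le 1$ bounds and the marginalization equalities $\sum_{\alpha'} y(V_{t'}\cup X',\beta\cup\alpha') = y(V_{t'},\beta)$. Vertices of $G$ not lying in $G'$ are simply priced arbitrarily (say $0$), since they are irrelevant to $\opt_{LP\text{-}r}(G')$. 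The proof of Lemma~\ref{lemma: sherali-adams rounding} then goes through verbatim — the induction on the ordering of $T$ only manipulates marginals of bags and their pairwise intersections — giving that for every $t \in T$, $\pr{}{p|_{V_t} \text{ agrees with }\alpha} = y(V_t,\alpha)$ for all $\alpha \in [P]^{V_t}$.

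Finally, exactly as in the corollary following Lemma~\ref{lemma: sherali-adams rounding}, for each edge $e = uv \in E(G')$ both endpoints lie in a common bag $V_t$ of the decomposition of $G'$, so the distribution of $p|_{\{u,v\}}$ equals $\{y(\{u,v\},\alpha)\}_\alpha$, and hence the expected revenue collected on $e$ is $\sum_{\alpha:\,\alpha(u)+\alpha(v)\le B_e}(\alpha(u)+\alpha(v))\,y(\{u,v\},\alpha)$, which is precisely the contribution of $e$ to the objective of (LP-$r$). Summing over $e \in E(G')$ gives expected revenue at least $\opt_{LP\text{-}r}(G')$, and derandomization is by conditional expectation (or by the support argument already noted). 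The only point requiring a little care — the "trivial modification" — is to check that the tree decomposition of the subgraph $G'$ still has all the sets the rounding touches of size $\le r$; this is immediate from width $\le r-1$, so there is no real obstacle, and this is why the authors call it a trivial modification.
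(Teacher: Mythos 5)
Your proof is correct and is exactly the "trivial modification" the paper alludes to: run the propagation of Lemma~\ref{lemma: sherali-adams rounding} along a width-$(r-1)$ tree decomposition of $G'$ rather than of $G$, using the feasibility of (LP-$r$) on $G$ to supply all the needed marginals, and observe that every edge of $G'$ lies in a bag so its LP contribution is recovered in expectation. The one small imprecision (the set $V_{t'}\cup X'$ may exceed size $r$ unless one conditions only on $V_{t'}\cap V_t$, or uses a smooth tree decomposition) is inherited verbatim from the paper's own description of the algorithm and is not a new gap.
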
 

Now we present the integrality gap upper bound. 
Suppose we have an optimal LP solution $\set{y(S,\alpha)}_{S, \alpha}$ of (LP-$r$), and let $\opt_{LP-r}$ denote the LP-cost of this solution. 
For any set of edges $E'$, let $\opt_{LP-r}(E')$ denote the LP-cost of this solution we get from edges in $E'$; i.e., 
\[
\opt_{LP-r}(E') = \sum_{e=(u,v) \in E'} \sum_{\alpha \in [P]^{\set{u,v}}: \atop \alpha(u) + \alpha(v) \leq B_e} (\alpha(u) + \alpha(v)) y(\set{u,v}, \alpha)
\]
For any subgraph $G'$ of $G$, we also let $\opt_{LP-r}(G') = \opt_{LP-r}(E(G'))$. Using a technique similar to Section~\ref{sec: algorithms for bounded genus}, we will show that there is a subgraph $G'$ of $G$ such that 
\[
\opt_{LP-r}(G') \geq (1-O(\epsilon)) \opt_{LP-r} \geq \opt_{LP-r}/(1+O(\epsilon))\,.
\]
Let $K= 1/\epsilon$. 
By the same method as in Section~\ref{sec: bounded genus}, we partition edges of $G$ into $E_1, \ldots, E_K$ and, for any $i$, let $G_i$ be the subgraph of $G$ such that $E(G_i)=E(G)\setminus E_i$. We guarantee that $G_i$ has small treewidth. 
%
%
Notice that instance $G_i$ has 
\[\opt_{LP-r}(G_i) = \opt_{LP-r} - \opt_{LP-r}(E_i),\] 
so there must be instance $G_{i^*}$ with total LP-cost of 
\[
\opt_{LP-r}(G_{i^*})\geq (1-1/K)\opt_{LP-r}=(1-O(\epsilon))\opt_{LP-r}.
\]
Since $G_{i^*}$ has small treewidth, i.e. at most $k=O(g/ \epsilon)$, if we ensure that $r$ is at least $k+1$, 
we can apply Theorem~\ref{thm:SA-rounding} to collect the revenue of $\opt_{LP-r}(G_{i^*}) \geq (1-O(\epsilon)) \opt_{LP-r}$, thus bounding the integrality gap of (LP-$r$) by a factor of $(1+O(\epsilon))$ as claimed.  



\danupon{I removed this paragraph since we already talked about it earlier.}


\section{$k$-partite graphs and bounded-degree graphs}
\label{sec: bounded degrees}

We note that, by using the same arguments as in Theorem~\ref{thm:hardness_planar}, one can show that {\sf GVP} is APX-hard even on graphs of constant degrees and $k$-partite graphs when $k$ is constant: By using the same reduction except that now we start from {\sf Vertex Cover} on cubic graphs (which is APX-hard), we can show that {\sf GVP} is APX-hard as well, and the degree of each vertex in the resulting instance $G'$ is at most a constant. 
So there is no PTAS, unless P=NP, even in bounded-degree graphs, and in this section we give new algorithmic results for many special cases. 
Our results are summarized in the following theorem. 

\begin{theorem} 
{\sf GVP} is polynomial time solvable on graphs of degree at most two and admits a $2$-approximation algorithm on graphs of degrees at most four. 
Moreover, in $k$-partite graphs, there is a $4(1- 1/k)$ approximation algorithm. 
\end{theorem}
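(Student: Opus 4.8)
The theorem bundles three separate claims, and I would attack them one by one. For graphs of degree at most two, a connected component is either a path or a cycle, so it suffices to give an exact polynomial-time algorithm there. The plan is to observe that such graphs have treewidth at most two, but treewidth alone only gives an FPTAS via Section~\ref{sec: bounded treewidth}; to get an \emph{exact} polynomial-time algorithm I would instead argue that there is an optimal pricing in which every vertex price lies in the set $\{B_e, B_e - B_{e'} : e, e' \in E\}$ (or $0$), which has polynomial size, and then run the dynamic program of Section~\ref{sec: bounded treewidth} directly over this polynomial-size price set instead of over $[P]$. For a path $v_1 v_2 \cdots v_n$ the DP state is just "price of $v_i$", with $O(n)$ candidate prices each, giving $O(n^3)$ time; for a cycle one fixes the price of one vertex and reduces to the path case. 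The key lemma here — that restricting to this discrete candidate set loses nothing — is proved by a local exchange argument: given any optimal pricing, repeatedly push each price up until some incident edge becomes tight (revenue cannot decrease since the set of satisfied edges only shrinks when we raise a price, and we stop before losing any).

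For the $k$-partite bound $4(1-1/k)$, I would follow the Balcan--Blum template but exploit the given $k$-coloring $V = V_1 \cup \cdots \cup V_k$ in place of a random bipartition. The plan is: pick a uniformly random subset $J \subseteq \{1,\dots,k\}$ of the color classes with a carefully chosen size, set all prices of vertices in $\bigcup_{i \in J} V_i$ to zero, and then solve the residual instance — which is a graph pricing instance where one independent "free" side has price $0$ — optimally (this residual is exactly the bipartite-with-one-zero-side case that Balcan--Blum solve optimally, since once one endpoint of every remaining edge is free the problem decomposes per vertex). An edge $uv$ with $u \in V_a$, $v \in V_b$ ($a \neq b$) survives (both endpoints priced nonzero) only if neither $a$ nor $b$ is in $J$; choosing $|J|$ optimally (the even/odd split $|J| = k/2$ resp. $(k\pm1)/2$ accounts for the two stated formulas) makes $\Pr[\text{$uv$ survives}]$ roughly $1/4$, and on the surviving subgraph the optimal revenue is still at least that of $p^*$ restricted there. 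Combined with the fact that we price a random subset of classes to zero, the expected fraction of $\opt$ recovered is $\tfrac{1}{4}\cdot\tfrac{k}{k-1}$; derandomize by conditional expectation / enumerating all $2^k$ choices of $J$ (a constant number).

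For the $2$-approximation on graphs of degree at most four, I would not route through $k$-partiteness (which for $k=5$ would only give $16/5 > 2$). Instead, the plan is to use that a graph of maximum degree $4$ has at most $2n$ edges and is $4$-edge-colorable (Vizing gives $5$, but one can do better here, or directly partition the edge set into two subgraphs each of maximum degree $2$): decompose $E$ into $E_1 \cup E_2$ where each $E_i$ induces a disjoint union of paths and cycles (possible since a degree-$\le 4$ graph's edges split into two degree-$\le 2$ subgraphs — e.g.\ via an Eulerian orientation making in- and out-degree $\le 2$, then take the two "directions", or via a perfect matching argument on an auxiliary graph). Solve each $E_i$ exactly by the degree-$\le 2$ algorithm above, and output the better of the two pricings. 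Since $\opt = \opt(E_1 \cup E_2) \le \opt(E_1) + \opt(E_2)$ when restricting $p^*$ to each part, the better one collects at least $\opt/2$. The main obstacle is verifying the edge decomposition step — that the edges of a graph with $\Delta \le 4$ partition into two subgraphs of $\Delta \le 2$ — which follows by taking an Eulerian orientation of each component (after adding a dummy vertex to fix odd-degree vertices) so that every vertex has out-degree at most $2$, then letting $E_1$ be an arbitrary split of the arcs at each vertex into two groups of size $\le 2$; I would double-check the parity bookkeeping there, as that is where a naive argument can slip.
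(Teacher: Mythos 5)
Your plan for the degree-at-most-four part is essentially the paper's proof: split $E$ into two subgraphs of maximum degree two via an Eulerian circuit (the paper adds an arbitrary matching on the odd-degree vertices rather than a dummy vertex, but it is the same trick), solve each exactly, and take the better of the two, using $\opt(E_1)+\opt(E_2)\ge\opt$. The $k$-partite part arrives at the right algorithm (zero out a random half of the color classes) and the right ratio, but the analysis as written is garbled: the quantity you need is the probability that exactly one of $a,b$ is zeroed (the edge is \emph{cut}), not ``$\Pr[uv \text{ survives}]\approx 1/4$,'' and the sentence about the optimal revenue ``on the surviving subgraph'' refers to a subgraph from which the algorithm, having zeroed $J$, cannot collect anything in a controlled way. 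The paper keeps the bookkeeping transparent by first restricting to the random bipartite cut of color classes and then applying Lemma~\ref{lem:Balcan-Blum} as a black box, giving ratio $2/\Pr[\text{cut}]$; I would rewrite your argument along those lines.

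The degree-at-most-two part has a genuine gap. Your discretization lemma --- that some optimal pricing takes values only in $\{0\}\cup\{B_e\}\cup\{B_e-B_{e'}\}$ --- is false. When all edges of a subpath contribute at the optimum, the optimal prices lie on a face of the polytope cut out by a chain of equalities $p_i+p_{i+1}=B_i$ anchored by a single $p_j=0$; the resulting coordinates are \emph{alternating partial sums} $B_j - B_{j+1}+B_{j+2}-\cdots$ of arbitrary length, which in general are not two-term differences (one can construct a $7$-vertex path whose entire optimal face consists of points with a coordinate equal to a length-$4$ alternating sum that equals no $B_e$ or $B_e-B_{e'}$). Your exchange argument does not rescue the claim: after raising $p(v)$ until edge $vw$ becomes tight you have $p(v)=B_{vw}-p(w)$ where $p(w)$ is itself not yet rounded, and iterating chases exactly such a chain. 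The paper sidesteps discretization entirely with a different dichotomy: for a subpath $P_{ij}$, either every edge contributes positive revenue at the optimum --- in which case the optimum equals $LP_{opt}(P_{ij})$, a linear program over real prices, and no rounding is needed --- or some edge $v_kv_{k+1}$ contributes zero, in which case the instance decomposes and one recurses, $R[i,j]=\max\bigl(LP_{opt}(P_{ij}),\ \max_k R[i,k]+R[k+1,j]\bigr)$; cycles are handled by deleting one edge and reducing to a path. If you prefer to keep a DP-over-discrete-prices formulation, you must enlarge the candidate set to all $O(n^2)$ alternating partial sums of consecutive budgets, but the paper's LP-plus-split recursion is cleaner.
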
 

\subsection{Improved approximation algorithms on $k$-partite graphs}\label{sec:kcolor}

Let $G$ be a $k$-partite graph. We assume that we know the partition $G$ into $k$ partitions. To avoid confusion, we sometimes called each partition a {\em color class}. First we randomly partition the color classes into two roughly equal sides called $L$ and $R$, i.e. we ensure that the number of color classes in $L$ differ by those in $R$ by at most one. We say that an edge $e$ is cut if exactly one endpoint of $e$ belongs to $L$. We analyze this algorithm in two cases using counting arguments. For the sake of analysis, we think of each node as having a unique integer ID. We need the following lemma from \cite{BalcanB07}.

\begin{lemma}\label{lem:Balcan-Blum}
(Balcan and Blum \cite{BalcanB07}) There is a 2-approximation for the Graph Vertex Pricing problem on bipartite graphs.
\end{lemma}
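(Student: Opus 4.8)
The plan is to follow the standard ``one-side-free'' argument of Balcan and Blum. Write the bipartite input as $G=(L\cup R,E)$ with budgets $\set{B_e}_{e\in E}$, so that every edge has one endpoint in $L$ and one in $R$. The algorithm is deterministic: for each choice of a side $Z\in\set{L,R}$, it sets $p(v)=0$ for all $v\in Z$ and then computes an \emph{optimal} completion of this price on the opposite side; finally it outputs whichever of the two resulting price functions collects more revenue. (This also subsumes the randomized ``pick a side uniformly'' version.)

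The first step is to observe that once one whole side is priced to zero, the remaining optimization decouples across the vertices of the other side and is therefore solvable exactly in polynomial time. Suppose $p(u)=0$ for all $u\in L$. Then for $v\in R$ with price $x=p(v)\ge 0$, an edge $uv$ (with $u\in L$) is bought iff $x\le B_{uv}$ and it then contributes exactly $x$; so the total revenue is $\sum_{v\in R} x_v\cdot|\set{u: uv\in E,\ B_{uv}\ge x_v}|$, a sum of independent per-vertex terms. For a fixed $v$ the map $x\mapsto x\cdot|\set{u: uv\in E,\ B_{uv}\ge x}|$ is maximized at $x=0$ or at one of the incident budgets $B_{uv}$, so trying these $O(\deg v)$ candidates gives the per-vertex optimum, and summing over $v\in R$ gives the global optimum of the restricted instance. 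The symmetric statement holds when $R$ is priced to zero.

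The second step is the charging argument. Fix an optimal price $p^*$ for $G$ with revenue $\opt$, and let $E^*\subseteq E$ be the edges $e=uv$ that are bought, i.e.\ $p^*(u)+p^*(v)\le B_e$. Split each bought edge's revenue between its endpoints: $\opt_L=\sum_{uv\in E^*,\,u\in L} p^*(u)$ and $\opt_R=\sum_{uv\in E^*,\,v\in R} p^*(v)$, so $\opt_L+\opt_R=\opt$. Consider the price that sets $p(u)=0$ for $u\in L$ and $p(v)=p^*(v)$ for $v\in R$: for every $uv\in E^*$ we have $0+p^*(v)\le p^*(u)+p^*(v)\le B_{uv}$, so $uv$ is still bought and now contributes $p^*(v)$; hence this particular price collects at least $\opt_R$, and so the optimal completion the algorithm computes for $Z=L$ collects at least $\opt_R$. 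Symmetrically, the completion for $Z=R$ collects at least $\opt_L$. Taking the better of the two, the algorithm's revenue is at least $\max(\opt_L,\opt_R)\ge \opt/2$, proving the lemma.

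There is essentially no obstacle here; the only two points that require care are (i) verifying that restricting candidate prices on the free side to $\set{0}\cup\set{B_e:e\in E}$ loses nothing, so that the one-side-free instance is genuinely solved to optimality in polynomial time, and (ii) noting that zeroing out a whole side can only help a previously-bought edge and never makes it unaffordable, which is exactly what makes the $\opt=\opt_L+\opt_R$ charging valid.
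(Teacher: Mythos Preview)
The paper does not prove this lemma; it is stated as a citation to Balcan and Blum \cite{BalcanB07} and used as a black box. Your argument is correct and is exactly the standard Balcan--Blum approach the paper sketches in its introduction (price one side to zero, solve the resulting per-vertex problems optimally, take the better of the two sides), so there is nothing to compare.
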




\paragraph{$k$ is even:} Notice that the number of possible cuts is ${k-1 \choose k/2 -1}$. Each edge $e$ belongs to exactly $\binom{k-2}{k/2-1}$ cuts (after fixing two end vertices of the edge, we pick $k/2-1$ more color classes to join the side of the node with smaller ID). Therefore, the probability that each edge is cut is
\[\frac{{k-2 \choose k/2- 1}}{{k-1 \choose k/2-1}} = k/2(k-1)\,.\]
Observe that this term is slightly more than $1/2$. Using Lemma~\ref{lem:Balcan-Blum}, we get an approximation ratio of $4(k-1)/k$.




\paragraph{$k$ is odd:} In this case, the number of possible cuts is $\binom{k}{\lfloor k/2\rfloor}$. Each edge will be in  $2\binom{k-2}{\lfloor k/2\rfloor-1}$ cuts (after fixing two end vertices of the edge, we pick $\lfloor k/2\rfloor-1$ more color classes to join one side out of two sides). It is easy to see that this term is equal to $\binom{k-2}{\lfloor k/2\rfloor-1} + \binom{k-2}{\lfloor k/2\rfloor} =\binom{k-1}{\lfloor k/2\rfloor}$. Hence the probability that each edge is cut is
$\frac{\binom{k-1}{\lfloor k/2\rfloor}}{\binom{k}{\lfloor k/2\rfloor}} = \frac{k+1}{2k}\,.$
Again, using Lemma~\ref{lem:Balcan-Blum}, we get an approximation ratio of $\frac{4k}{k+1}$.

%

In general graphs, we may think that $k=n$, so this will give a slightly improved approximation factor of $4(1-1/n)$.
%
%
%
\paragraph{Deterministic algorithm:} We use the method of conditional expectation. Put an arbitrary color class on the left side $L$. For each color class, determine whether we should put it in $L$ or $R$ as follows. Put the color class on the left side and calculate the conditional expectation (which can be computed since we can compute a probability that an edge is in the cut which depends on the number of vertices already present in $L$ and $R$). Try putting the color class in $R$ and do the same. Pick the choice that gives better expected revenue.

\paragraph{Tightness of the algorithm:} Our approximation factor here is, in some sense, tight: We argue that improving this bound further for any even number $k$ would immediately imply an improved approximation ratio for general graphs. Suppose we have an $\alpha$ approximation algorithm for $k$-partite graphs where $\alpha < 4(k-1)/k$. We randomly partition nodes into $k$ parts and delete edges that connect two vertices in the same set of the partition. Each edge is deleted with probability $1/k$. Then we run the $\alpha$-approximation algorithm for the resulting $k$-partite graph, and this would give us an approximation ratio of $\alpha k/(k-1) < 4$ for general graphs.

\subsection{Polynomial-time algorithms for Graphs of degree at most two}\label{sec:degree2}

We observe that when the input graph has degree at most two, i.e. when in consists of paths and cycle, the problem can be solved in polynomial time. This can be done by considering two cases: when every edge contributes a non-zero revenue to the optimal solution and otherwise. In the former case, we can solve the problem by writing a linear program. In the latter case, the problem is solved by a dynamic programming technique.


We observe that for any graph $G(V,E)$ if every edge $e \in E$ contributes a positive amount to the total revenue of the optimal solution then the optimal vertex prices can be calculated using the following linear program :

\begin{eqnarray*}
\mbox{($LP_{opt}$)} & & \\
 & \max & \sum_{e=uv \in E}{(p(u)+p(v))} \\
 & \mbox{s.t.} & p(u) + p(v) \leq B_e \\
 & & p(u) \geq 0\ \mbox{for all}\ u \in V \\
\end{eqnarray*}

Let $LP_{opt}{(G)}$ represent the optimal value of $LP_{opt}$ on a graph $G$. Let $P =\{v_1,v_2,\dots,v_n\}$ be a path with edges $(v_i,v_{i+1})$ for $1 \leq i \leq n-1$. Let $OPT(G)$ represent the optimal revenue that can be obtained from the graph $G$. For $i < j$, let $R[v_i,v_j]$ be the maximum revenue that can be obtained from the subpath (say $P_{ij}$) induced by the vertices $v_i, v_{i+1},\dots,v_j$. Note that $R[v_i,v_i]=0$ for all $1 \leq i \leq n$. For $i < j$, $R[i,j]$ can be computed using the following recursion.

\[
  R[i,j] = \max \left\{
  \begin{array}{l l}
    LP_{opt}(P_{ij}) & \quad \text{if all the edges of $P_{ij}$}\\
     & \quad \text{contribute positive revenue.}\\
    \max_{k=i,\dots,j-1} R[v_i,v_k] + R[v_{k+1},v_j] & \quad \text{otherwise} \\
  \end{array} \right.
\]

$OPT(P) = R[1,n]$ is the optimal revenue obtained from the path $P$. By solving the above recursion we obtain a polynomial time algorithm for paths.

Now we design a polynomial time algorithm for cycles. Let $C={v_1,v_2,\dots,v_n}$ be a cycle with edges $(v_1,v_2),(v_2,v_3),\dots$,$(v_{n-1},v_n),(v_n,v_1)$. If all the edges of $C$ contribute positive revenue then the maximum revenue is given by $LP_{opt}(C)$. If one of the edges $uv \in C$ contributes zero to the total revenue then the maximum revenue is obtained by using the above mentioned algorithm on the path obtained by deleting $uv$ from $C$. Hence we get the following recursion.

\[
  OPT(C) = \max \left\{
  \begin{array}{l l}
    LP_{opt}(C) & \quad \text{if all the edges of $C$}\\
     & \quad \text{contribute positive revenue}\\
    \max_{e \in V(C)} OPT(C \setminus e) & \quad \text{otherwise} \\
  \end{array} \right.
\]

Since $C \setminus e$ is a path, we use the previous algorithm to compute $OPT(C \setminus e)$ in polynomial time. Solving this recursion we get a polynomial time algorithm for cycles.

Any graph of degree at most two consists of paths and cycles. By combining the above mentioned algorithms for paths and cycles we get a polynomial time algorithm for graphs of degree at most two.

%
%

\subsection{Graphs of degree at most four}

By using the algorithm mentioned in Section \ref{sec:kcolor}, we get factor $3$ approximation algorithms on graphs with degree at most three and four. Now we present a better algorithm achieving an approximation factor of $2$ for graphs of degree at most four.

Let $G=(V,E)$ be a graph with degree at most four. We decompose edges of $G$ into $E= E_1 \cup E_2$ such that the corresponding subgraphs $G_1 = (V,E_1)$ and $G_2 = (V,E_2)$ have degree at most two: This is a standard trick. First add an arbitrary matching $M$ between odd-degree vertices, so the resulting graph is Eulerian. Let $C$ be its Eulerian tour. Then we let $E_1= C \setminus M$ and $E_2 = E \setminus E_1$.

Using the algorithm from Section \ref{sec:degree2} we can compute optimal solutions to $G_1$ and $G_2$ in polynomial time. Note that $\opt(G_1) + \opt(G_2) \geq \opt(G)$, so we get $2$-approximation.


\section{Open Problems}

This paper settled the complexity of {\sf GVP} in graphs of bounded genus. Since the integrality gap of Sherali-Adams hierarchy applied to a natural LP for these cases is $(1+\epsilon)$, it is interesting to see how this integrality gap behaves in general graphs. Using lift-and-project LP might be a possible way to get improved approximation algorithms or better hardness results. Note that, for two rounds of Sherali-Adams hierarchy, the integrality gap is at least $4-\epsilon$ by Khandekar et al. We also considered the bounded degree cases and show a $2$-approximation algorithm for cubic graphs. It is also interesting to see whether this ratio is tight. We believe that this problem remains {\sf NP}-hard even on trees.


\paragraph{Acknowledgement:} We would like to thank the reviewers for several thoughtful comments.

\bibliographystyle{alpha}
\bibliography{bib-main}

\newcommand{\etalchar}[1]{$^{#1}$}
\begin{thebibliography}{GGM{\etalchar{+}}05}

\bibitem[AGG07]{AdlerGG07}
Isolde Adler, Georg Gottlob, and Martin Grohe.
\newblock {Hypertree Width and Related Hypergraph Invariants}.
\newblock {\em Eur. J. Comb.}, 28(8):2167--2181, 2007.

\bibitem[Bak94]{Baker94}
Brenda~S. Baker.
\newblock {Approximation Algorithms for NP-Complete Problems on Planar Graphs}.
\newblock {\em J. ACM}, 41(1):153--180, 1994.

\bibitem[BB07]{BalcanB07}
Maria-Florina Balcan and Avrim Blum.
\newblock {Approximation Algorithms and Online Mechanisms for Item Pricing}.
\newblock {\em Theory of Computing}, 3(1):179--195, 2007.
\newblock Also in EC'06.

\bibitem[BHKY62]{Genus62}
Joseph Battle, Frank Harary, Yukihiro Kodama, and J.~W.~T. Youngs.
\newblock Additivity of the genus of a graph.
\newblock {\em Bull. Amer. Math. Soc.}, 68:565--568, 1962.

\bibitem[BK11]{BriestK11}
Patrick Briest and Piotr Krysta.
\newblock Buying cheap is expensive: Approximability of combinatorial pricing
  problems.
\newblock {\em SIAM J. Comput.}, 40(6):1554--1586, 2011.

\bibitem[B{\"O}04]{Treewidth-SheraliAdams}
Daniel Bienstock and Nuri {\"O}zbay.
\newblock {Tree-width and the Sherali-Adams Operator}.
\newblock {\em Discrete Optimization}, 1(1):13--21, 2004.

\bibitem[Bod96]{Bodlaender96}
Hans~L. Bodlaender.
\newblock {A Linear-Time Algorithm for Finding Tree-Decompositions of Small
  Treewidth}.
\newblock {\em SIAM J. Comput.}, 25(6):1305--1317, 1996.
\newblock Also in STOC'93.

\bibitem[Bro41]{Brook41}
R.~L. Brooks.
\newblock {On Colouring the Nodes of a Network}.
\newblock {\em Proc. Cambridge Philosophical Society, Math. Phys. Sci.},
  37:194--197, 1941.

\bibitem[CCKK12]{ChalermsookCKK12}
Parinya Chalermsook, Julia Chuzhoy, Sampath Kannan, and Sanjeev Khanna.
\newblock Improved hardness results for profit maximization pricing problems
  with unlimited supply.
\newblock In Anupam Gupta, Klaus Jansen, Jos{\'e} D.~P. Rolim, and Rocco~A.
  Servedio, editors, {\em APPROX-RANDOM}, volume 7408 of {\em Lecture Notes in
  Computer Science}, pages 73--84. Springer, 2012.

\bibitem[CDF{\etalchar{+}}09]{CardinalDFJNW09}
Jean Cardinal, Erik~D. Demaine, Samuel Fiorini, Gwena{\"e}l Joret, Ilan Newman,
  and Oren Weimann.
\newblock {The Stackelberg Minimum Spanning Tree Game on Planar and
  Bounded-Treewidth Graphs}.
\newblock In {\em WINE}, pages 125--136, 2009.

\bibitem[CKR10]{SparsestCut-Treewidth}
Eden Chlamtac, Robert Krauthgamer, and Prasad Raghavendra.
\newblock {Approximating Sparsest Cut in Graphs of Bounded Treewidth}.
\newblock In {\em APPROX-RANDOM}, pages 124--137, 2010.

\bibitem[CLN13a]{ChalermsookLN13soda}
Parinya Chalermsook, Bundit Laekhanukit, and Danupon Nanongkai.
\newblock Graph products revisited: Tight approximation hardness of induced
  matching, poset dimension and more.
\newblock In Sanjeev Khanna, editor, {\em SODA}, pages 1557--1576. SIAM, 2013.

\bibitem[CLN13b]{ChalermsookLN13FOCS}
Parinya Chalermsook, Bundit Laekhanukit, and Danupon Nanongkai.
\newblock Independent set, induced matching, and pricing: Connections and tight
  (subexponential time) approximation hardnesses.
\newblock In {\em FOCS}, 2013.

\bibitem[CT12]{hierarchies-survey}
Eden Chlamtac and Madhur Tulsiani.
\newblock {Convex Relaxations and Integrality Gaps}.
\newblock {\em Handbook on Semidefinite, Conic and Polynomial Optimization,
  International Series in Operations Research and Management Science},
  166:139--169, 2012.

\bibitem[DHiK05]{DemaineHK05}
Erik~D. Demaine, Mohammad~Taghi Hajiaghayi, and Ken ichi Kawarabayashi.
\newblock {Algorithmic Graph Minor Theory: Decomposition, Approximation, and
  Coloring}.
\newblock In {\em FOCS}, pages 637--646, 2005.

\bibitem[Epp00]{Eppstein00}
David Eppstein.
\newblock {Diameter and Treewidth in Minor-Closed Graph Families}.
\newblock {\em Algorithmica}, 27(3):275--291, 2000.
\newblock Also in SODA'95.

\bibitem[GGM{\etalchar{+}}05]{GottlobGMSS05}
Georg Gottlob, Martin Grohe, Nysret Musliu, Marko Samer, and Francesco
  Scarcello.
\newblock {Hypertree Decompositions: Structure, Algorithms, and Applications}.
\newblock In {\em WG}, pages 1--15, 2005.

\bibitem[GHK{\etalchar{+}}05]{GuruswamiHKKKM05}
Venkatesan Guruswami, Jason~D. Hartline, Anna~R. Karlin, David Kempe, Claire
  Kenyon, and Frank McSherry.
\newblock {On Profit-Maximizing Envy-free Pricing}.
\newblock In {\em SODA}, pages 1164--1173, 2005.

\bibitem[GJ77]{DBLP:journals/siamam/GareyJ77}
M.~R. Garey and David~S. Johnson.
\newblock {The Rectilinear Steiner Tree Problem in NP Complete}.
\newblock {\em SIAM Journal of Applied Mathematics}, 32:826--834, 1977.

\bibitem[Hea90]{heawood90}
P.~J. Heawood.
\newblock {Map Colour Theorem}.
\newblock {\em Quarterly Journal of Pure and Applied Mathematics}, 24:332--338,
  1890.

\bibitem[KKMS09]{KhandekarKMS09}
Rohit Khandekar, Tracy Kimbrel, Konstantin Makarychev, and Maxim Sviridenko.
\newblock {On Hardness of Pricing Items for Single-Minded Bidders}.
\newblock In {\em APPROX-RANDOM}, pages 202--216, 2009.

\bibitem[KMN10]{KarlinMN10}
Anna~R. Karlin, Claire Mathieu, and C.~Thach Nguyen.
\newblock {Integrality Gaps of Linear and Semi-definite Programming Relaxations
  for Knapsack}.
\newblock {\em CoRR}, abs/1007.1283, 2010.

\bibitem[KMR11]{KrauthgamerMR11}
Robert Krauthgamer, Aranyak Mehta, and Atri Rudra.
\newblock {Pricing Commodities}.
\newblock {\em Theor. Comput. Sci.}, 412(7):602--613, 2011.
\newblock Also in WAOA'07.

\bibitem[RS84]{RobertsonS84}
Neil Robertson and Paul~D. Seymour.
\newblock {Graph Minors. III. Planar Tree-Width}.
\newblock {\em J. Comb. Theory, Ser. B}, 36(1):49--64, 1984.

\bibitem[SA90]{SheraliA90}
Hanif~D. Sherali and Warren~P. Adams.
\newblock {A Hierarchy of Relaxations Between the Continuous and Convex Hull
  Representations for Zero-One Programming Problems}.
\newblock {\em SIAM J. Discrete Math.}, 3(3):411--430, 1990.

\bibitem[WJ04]{WainwrightJ04}
Martin~J Wainwright and Michael~I Jordan.
\newblock Treewidth-based conditions for exactness of the sherali-adams and
  lasserre relaxations.
\newblock {\em Univ. California, Berkeley, Technical Report}, 671, 2004.

\end{thebibliography}

\newpage
\appendix

\section*{Appendix}

\section{Generating (LP-$r$) from Lift-and-project operations}
\label{appendix: sherali-adams}

In this section we show that the constraints in (LP-$r$) can be automatically generated by applying $r$ rounds of Sherali-Adams lift-and-projects on the base LP. For an overview of Sherali-Adams hierarchy and the corresponding lift-and-project operations, we refer the reader to a survey by Chlamtac and Tulsiani \cite{hierarchies-survey}. There are many ways to write a natural LP relaxation for this problem, but one natural choice is the following: For each vertex $v$ and each possible price $i \in [P]$, we introduce variable $x(v,i)$, whose supposed value is to indicate that the price of $v$ is $i$. For each pair of vertices $u,v \in V$, we have variable $z(u,v,i,j)$ to be an indicator of $p(u) = i$ and $p(v) = j$.

\begin{eqnarray*}
\mbox{(LP')} & & \\
 & \max & \sum_{(u,v) \in E} \sum_{i+j \leq B_e} (i+j) z(u,v,i,j) \\
 & \mbox{s.t.} & \sum_{i \in [P]} x(v, i) = 1 \mbox{ \hspace{0.2in} for all $v \in V$} \\
 & & \sum_{i,j \in [P]} z(u,v,i,j) = 1 \mbox{ \hspace{0.2in} for all $u,v \in V$}\\
 & & z(u,v,i,j) \geq x(u,i) + x(v,j) -1 \mbox{\hspace{0.2in} for all $u, v \in V$, $i,j \in [P]$} \\
 & & x(u,i), z(u,v,i,j) \in [0,1]
\end{eqnarray*}


The last set of constraints enforces (in the integral world) that if $x(u,i) = 1$ and $x(v,j) =1$, then $z(u,v,i,j)$ must be set to $1$. It is easy to see that this LP has a bad integrality gap \parinya{Check this!}. We remark that the LP relaxation used by Khandekar et al. \cite{KhandekarKMS09} is equivalent to (LP') after two rounds of Sherali-Adams lift-and-project operations, and has an integrality gap upper bound of $4$ in general graphs.

Let $K_r$ be the polytope of feasible solution to (LP-$r$) and $K'$ be the feasible polytope for (LP'). Denote by ${\sf SA}^r(K')$ the polytope obtained after applying $r$ rounds of lift-and-projects to $K'$. We will not derive all the constraints that define ${\sf SA}^r(K')$, but instead we will only show that ${\sf SA}^{r}(K') \subseteq K_r$. This is sufficient for us to conclude that $O(k)$ rounds of Sherali-Adams hierarchy are enough to solve the graph pricing problem where the graph has tree-width at most $k$.

\begin{lemma}
For all $r \geq 2$, ${\sf SA}^{r}(K') \subseteq K_r$
\end{lemma}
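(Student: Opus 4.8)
The plan is to argue in the lifted space and then project. Recall that a feasible point of ${\sf SA}^{r}(K')$ carries moment values $\hat{y}_{W}$ for every set $W$ of at most $r$ base variables of (LP') (the variables being the $x(v,i)$ and the $z(u,v,i,j)$), with $\hat{y}_{\emptyset}=1$, $\hat{y}_{W}\ge 0$ for all $W$, and, for every constraint $\sum_{\ell} a_{\ell}\xi_{\ell}\;\bowtie\; b$ of (LP') (where $\bowtie\,\in\{\le,=\}$) and every subset $M$ of base variables with $|M|\le r-1$, the linearized lifted relation $\sum_{\ell} a_{\ell}\hat{y}_{M\cup\{\xi_{\ell}\}}\;\bowtie\; b\,\hat{y}_{M}$, under the convention $\hat{y}_{M\cup\{\xi_{\ell}\}}=\hat{y}_{M}$ when $\xi_{\ell}\in M$; this is precisely the family of relations produced by $r$ rounds of lift-and-project (see \cite{hierarchies-survey}), and a level-$r$ solution restricted to lower-degree moments is a solution of all lower levels. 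Given such a point, I define for each $S\subseteq V$ with $|S|\le r$ and each $\alpha\in[P]^{S}$
\[ y(S,\alpha)\;:=\;\hat{y}_{W(S,\alpha)},\qquad W(S,\alpha):=\{\,x(v,\alpha(v)):v\in S\,\}, \]
noting that $W(S,\alpha)$ is a set of exactly $|S|$ distinct base variables (the variables $x(v,\cdot)$ have distinct first coordinates). I then have to check $(y(S,\alpha))_{S,\alpha}\in K_{r}$ and that the objective value is preserved.

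First, $y(\emptyset,\emptyset)=\hat{y}_{\emptyset}=1$ and $y(S,\alpha)=\hat{y}_{W(S,\alpha)}\ge 0$. The core step is single-vertex marginalization: for $v\notin A$ with $|A\cup\{v\}|\le r$ and $\gamma\in[P]^{A}$, lifting the equality $\sum_{i\in[P]}x(v,i)=1$ of (LP') by the monomial $\prod_{a\in A}x(a,\gamma(a))$ (degree $|A|\le r-1$) gives $\sum_{i\in[P]}\hat{y}_{W(A,\gamma)\cup\{x(v,i)\}}=\hat{y}_{W(A,\gamma)}$, i.e. $\sum_{i\in[P]}y(A\cup\{v\},\gamma\cup\{v\mapsto i\})=y(A,\gamma)$. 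The general constraint of (LP-$r$), $\sum_{\beta\in[P]^{T}}y(S\cup T,\alpha\cup\beta)=y(S,\alpha)$ for disjoint $S,T$ with $|S\cup T|\le r$, then follows by peeling off the vertices of $T$ one at a time and telescoping, every intermediate set having size at most $|S\cup T|\le r$. Taking $S=\emptyset$ gives $\sum_{\beta\in[P]^{T}}y(T,\beta)=1$, whence $0\le y(S,\alpha)\le 1$. This establishes all three constraint families of (LP-$r$).

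It remains to match the objective, i.e. to show $\hat{y}_{\{z(u,v,i,j)\}}=\hat{y}_{\{x(u,i),x(v,j)\}}=y(\{u,v\},\{u\mapsto i,v\mapsto j\})$ for every edge $uv$ and all $i,j$. Lifting $z(u,v,i,j)\ge x(u,i)+x(v,j)-1$ by the monomial $x(u,i)$ gives $\hat{y}_{\{z(u,v,i,j),x(u,i)\}}\ge\hat{y}_{\{x(u,i)\}}+\hat{y}_{\{x(u,i),x(v,j)\}}-\hat{y}_{\{x(u,i)\}}=\hat{y}_{\{x(u,i),x(v,j)\}}$, and lifting $z(u,v,i,j)\ge 0$ by $(1-x(u,i))$ gives $\hat{y}_{\{z(u,v,i,j)\}}\ge\hat{y}_{\{z(u,v,i,j),x(u,i)\}}$; chaining, $\hat{y}_{\{z(u,v,i,j)\}}\ge\hat{y}_{\{x(u,i),x(v,j)\}}$. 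On the other hand, lifting $\sum_{i,j}z(u,v,i,j)=1$ by the empty monomial gives $\sum_{i,j}\hat{y}_{\{z(u,v,i,j)\}}=1$, while single-vertex marginalization gives $\sum_{i,j}\hat{y}_{\{x(u,i),x(v,j)\}}=\sum_{i}\hat{y}_{\{x(u,i)\}}=1$. A sum that dominates another term by term yet has the same total must agree with it in every term, so equality holds throughout; hence the (LP') objective value of the given point equals the (LP-$r$) objective value of $(y(S,\alpha))_{S,\alpha}$, and ${\sf SA}^{r}(K')\subseteq K_{r}$ with value preserved.

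I expect the only delicate points to be twofold: aligning the round-counting convention so that degree-$r$ moments are guaranteed after $r$ rounds — this is exactly the convention under which the stated containment ${\sf SA}^{r}(K')\subseteq K_{r}$ is sharp, and a convention yielding degree-$(r+1)$ moments only strengthens it — and the $z$-variable identification in the last paragraph, which is the one place where the specific McCormick-type constraint of (LP') together with $\sum_{i,j}z(u,v,i,j)=1$ is genuinely used; everything else is the routine bookkeeping that Sherali--Adams moments form a consistent family of local distributions.
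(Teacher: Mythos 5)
Your proof is correct, and it reaches the same conclusion by a genuinely different route than the paper. The paper argues by induction on $r$: the base case $r=2$ establishes the identification $\hat{y}_{\{z(u,v,i,j)\}}=\hat{y}_{\{x(u,i),x(v,j)\}}$ (via exactly the ``dominated sum with equal totals'' argument you use), and the inductive step applies one further round of lift-and-project to a constraint of (LP-$r$), peeling $S=\{v\}\cup S'$. You instead work entirely in the level-$r$ moment space from the start: you define the projection $y(S,\alpha):=\hat{y}_{W(S,\alpha)}$ using only the $x$-moments, verify the marginalization family by telescoping single-vertex marginalizations (each lifted by a monomial of degree at most $r-1$), and then \emph{separately} verify that the degree-$1$ moment $\hat{y}_{\{z(u,v,i,j)\}}$ coincides with the degree-$2$ moment $\hat{y}_{\{x(u,i),x(v,j)\}}$ so the objective is preserved. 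The shared kernel is the $z$-identification: in both proofs one lifts the McCormick constraint by $x(u,i)$, lifts nonnegativity (or $x(u,i)\le 1$) to get $\hat{y}_{\{z\}}\ge\hat{y}_{\{z,x(u,i)\}}$, chains, and closes with the summation argument. What your version buys: it sidesteps the paper's slightly informal inductive step of applying ``one more round'' to (LP-$r$) — a derived polytope, not the base LP — and it makes explicit that the containment must be read together with a linear projection that preserves the objective, a point the paper leaves implicit. What the paper's version buys: a shorter verification of the marginalization family, since those constraints are assumed inductively for (LP-$r$) and only one new vertex needs to be absorbed. Both are sound given the standard SA convention that $r$ rounds produce consistent moments up to degree $r$ and linearizations of base constraints multiplied by monomials (in variables and complements) of degree at most $r-1$, which you correctly flag as the one convention-dependent point.
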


\begin{proof}
We prove by induction on $r$ that constraints of (LP-$r$) can be generated by $r$ rounds of lift operations. For $r=2$, we show how to derive all the constraints of (LP-$2$) using at most $2$ rounds. We introduce variables $y'(S, \alpha)$ for each set $S \subseteq V, |S| \leq 2$ and for each assignment $\alpha \in [P]^S$, such that $y'(\set{u}, i) = x(u,i)$ for all $u \in V$ and $i \in [P]$. We let $y'(\emptyset, \emptyset) =1$. It is easy to see that variables $y'(S,\alpha)$ correspond exactly to the variables $y(S,\alpha)$ in (LP-2), and all constraints can be generated. The only nontrivial part is to show that, after two rounds of lifts, the constraint $y'(\set{u,v}, \set{i,j}) = z(u,v,i,j)$ holds for all $u,v \in V$ and $i,j \in [P]$. We sketch a proof of this fact here: First, apply $y'(u,i)$ to the third set of constraints to get
\[y'(u,i) \star z(u,v,i,j) \geq y'(\set{u,v}, \set{i,j})\]
By applying $z(u,v,i,j)$ to the inequality $y'(u,i) \leq y(\emptyset, \emptyset)$, we can get $y'(u,i) \star z(u,v,i,j) \leq z(u,v,i,j)$, so this implies $y'(\set{u,v}, \set{i,j}) \leq z(u,v,i,j)$. By summing over $i,j$, we get $1 = \sum_{i,j \in [P]} y'(\set{u,v}, \set{i,j}) \leq \sum_{i,j \in [P]} z(u,v,i,j) = 1$, so the inequality has to be tight for all $i,j$, i.e. $z(u,v,i,j) = y'(\set{u,v}, \set{i,j})$, as desired.

Next we show that, by applying one round of lift-and-project to (LP-$r$), we get all the constraints in (LP-$(r+1)$). It suffices to consider only the constraints that involve $S,T \subseteq V$ such that $|S \cup T| = r+1$. Let $S$ be a non-empty set (if it was empty, the claim follows trivially), so we can write $S = \set{v} \cup S'$ and $\alpha' = \alpha |_{S'}$. Notice that constraint $\sum_{\beta \in [P]^T} y(S' \cup T, \alpha' \cup \beta) = y(S', \alpha')$ belongs to (LP-$r$). By applying Sherali-Adams operation $\set{y(\set{v}, \alpha(v))} \star$ to both sides, we get the desired inequality.
\end{proof}

\begin{corollary}
If the input graph has tree-width at most $k$, then the Sherali-Adams polytope ${\sf SA}^{k+1}(K')$ has an integrality gap exactly one.
\end{corollary}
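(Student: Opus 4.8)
The plan is to deduce this immediately from the containment $\mathsf{SA}^{k+1}(K') \subseteq K_{k+1}$ just proved, together with the rounding algorithm of \secref{sec: Sherali-Adams}. Recall that integrality gap ``exactly one'' for a maximization relaxation means that on every instance of the relevant class the optimum of the relaxation equals the integral optimum $\opt$; since the argument below works for an arbitrary graph $G$ of treewidth at most $k$, it suffices to treat one such $G$.

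First I would record the trivial direction: the Sherali--Adams hierarchy is a relaxation of the integral polytope, so every integral price function $p^*$ lifts to a feasible point of $\mathsf{SA}^{k+1}(K')$ achieving the same objective value (this is exactly the relaxation check already carried out for (LP-$r$) in \secref{sec: Sherali-Adams}, and lift-and-project only shrinks the feasible region relative to the original LP, which already contains all integral solutions). Hence $\opt \le \opt_{\mathsf{SA}^{k+1}(K')}$.

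For the other direction I would invoke the lemma: $\mathsf{SA}^{k+1}(K') \subseteq K_{k+1}$, the feasible polytope of (LP-$(k+1)$). The one point that needs care is that the objectives match under this containment: the (LP$'$) objective uses the variables $z(u,v,i,j)$, while the (LP-$(k+1)$) objective uses $y(\set{u,v},\set{i,j})$, and these are identified precisely by the equality $z(u,v,i,j) = y'(\set{u,v},\set{i,j})$ derived in the base case $r=2$ of the lemma's induction (and preserved, since every subsequent lift keeps all earlier constraints). Therefore the GVP revenue optimized over $\mathsf{SA}^{k+1}(K')$ equals the (LP-$(k+1)$) objective on its image in $K_{k+1}$, giving $\opt_{\mathsf{SA}^{k+1}(K')} \le \opt_{LP-(k+1)}$. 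Finally, since $G$ has treewidth at most $k$, \lemref{lemma: sherali-adams rounding} and its corollary convert any feasible solution of (LP-$(k+1)$) — in particular an optimal one — into an integral price function of expected revenue at least $\opt_{LP-(k+1)}$; an integral solution cannot exceed $\opt$, so $\opt_{LP-(k+1)} \le \opt$. Chaining, $\opt \le \opt_{\mathsf{SA}^{k+1}(K')} \le \opt_{LP-(k+1)} \le \opt$, so all quantities coincide and the integrality gap of $\mathsf{SA}^{k+1}(K')$ is exactly one.

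The only real ``obstacle'' is bookkeeping rather than mathematics: one must be sure that the objective being optimized over the projected Sherali--Adams polytope is genuinely the GVP revenue and coincides term-by-term with the (LP-$(k+1)$) objective, which is what the $z=y(\set{u,v},\cdot)$ identification delivers. Beyond that, the statement is a direct composition of the two results already in hand — the inclusion $\mathsf{SA}^{r}(K')\subseteq K_r$ and the treewidth-$k$ rounding — applied with $r=k+1$.
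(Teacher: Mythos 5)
Your proposal is correct and follows the route the paper clearly intends (though the paper leaves the corollary's proof implicit): combine the containment $\mathsf{SA}^{k+1}(K')\subseteq K_{k+1}$ from the lemma with the treewidth-$k$ rounding of Section~\ref{sec: Sherali-Adams}, and note that the objectives of (LP$'$) and (LP-$(k+1)$) are reconciled by the identification $z(u,v,i,j)=y(\set{u,v},\cdot)$ derived in the base case. You also correctly supply the easy direction $\opt\le\opt_{\mathsf{SA}^{k+1}(K')}$, which the paper takes for granted.
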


\end{document}